\documentclass[journal]{IEEEtran}

\IEEEoverridecommandlockouts

\usepackage[cmex10]{amsmath}

\usepackage{amssymb}

\usepackage{bm, bbm}

\usepackage{graphicx}
\usepackage{graphics}
\graphicspath{{Graphics/}}

\usepackage[caption=false,font=footnotesize]{subfig}
\usepackage{ragged2e}

\usepackage{multirow, booktabs, makecell}

\usepackage{enumitem} 

\usepackage{algorithmic, algorithm}

\usepackage{cite}

\usepackage{xcolor, soul}

\usepackage[american]{babel}

\usepackage[T1]{fontenc}

\usepackage[utf8]{inputenc}

\usepackage{siunitx}

\usepackage{amsthm}
\theoremstyle{plain}

\newtheorem{proposition}{Proposition}

\theoremstyle{definition}

\theoremstyle{remark}
\newtheorem{remark}{Remark}

\usepackage{xparse}
\newcounter{mycommentctr}

\NewDocumentEnvironment{mycomment}{o}
{%
	\refstepcounter{mycommentctr}%
	\begingroup
	\color{red}
	\noindent
	\textbf{Comment~\themycommentctr%
		\IfValueT{#1}{ (#1)}:}\quad
}
{%
	\par
	\endgroup
}

\usepackage{microtype}
\DeclareMathOperator*{\argmin}{\arg\min}
\DeclareMathOperator*{\argmax}{\arg\max}

\newcommand{\e}{\mathrm{e}}
\newcommand{\imag}{\mathrm{j}}
 
\title{Direct and Ambient Backscatter Communications with a Dual‑Function  Radar Transmitter}
\author{Yubo~Zhang, Luca~Venturino,~\IEEEmembership{Senior~Member,~IEEE},  Xiaodong~Wang,~\IEEEmembership{Fellow,~IEEE} 
\vspace{-0.3cm}\thanks{Y.~Zhang and X.~Wang are with the Department of Electrical Engineering, Columbia University, New York, NY 10027, United States (e-mail: yz4891@columbia.edu, xw2008@columbia.edu). 
L. Venturino is with the Department of Electrical and Information Engineering, University of Cassino and Southern Lazio, 03043 Cassino, Italy, with the People Oriented Smart Technology Lab (POSTLab), European University of Technology (EUt+), European Union, and with National Inter-University Consortium for Telecommunications (CNIT), 43124 Parma, Italy (e-mail: l.venturino@unicas.it). }
\thanks{The work was supported by the U.S. National Science Foundation (NSF) under grant ECCS 2335765.}
}

\usepackage{hyperref}
\hypersetup{
	hidelinks,       
	colorlinks=true,
	linkcolor=blue,
	citecolor=blue,
	urlcolor=blue
}
\begin{document}
\bstctlcite{BSTcontrol}		
\IEEEpeerreviewmaketitle

\maketitle

\begin{abstract}
	This work considers a system where a dual-function radar transmitter (source) performs direct communication with a reader while simultaneously enabling ambient backscatter communication from a tag. The source embeds its message into a coded pulse repeatedly transmitted over a frame, whereas the tag exploits the resulting environmental reverberation (clutter) as an ambient carrier to convey its own message. By leveraging the structure induced by the radar waveforms, we develop two signaling schemes. In the pilot-free scheme, the source and tag messages are conveyed through nonlinear vector modulation; the induced subspace structure enables both joint decoding, where all unknown quantities are simultaneously estimated, and disjoint decoding, where the tag codeword is recovered first, followed by the estimation of the source codeword and the channel vectors. In the pilot-aided scheme, pilot symbols and linearly modulated data symbols are embedded within each frame, enabling both non-iterative decoding based on pilot-derived channel estimates and iterative decoding via alternating channel estimation and data detection. We establish sufficient conditions on the source and tag codebooks that guarantee noiseless identifiability of the involved messages and channels. Finally, performance is evaluated in terms of source/tag error probabilities and channel-estimation accuracy, and the resulting system-level tradeoffs are discussed.
\end{abstract}

\begin{IEEEkeywords}
Internet of Things, Ambient Backscatter Communication, Integrated Sensing and Communication, Dual-Function Radar-Communication Transmitter, Clutter.
\end{IEEEkeywords}
	
\section{Introduction}\label{SEC_Introduction}
The proliferation of the Internet of Things (IoT) is driving unprecedented demand for ubiquitous, energy-efficient wireless connectivity. As billions of battery-limited devices join the network, conventional active transmission faces mounting challenges related to spectrum scarcity and energy consumption~\cite{mazar2016radio,Poor-2022}. In this context, Ambient Backscatter Communication (ABC) has emerged as a promising technology for green IoT. By modulating existing radio frequency waves, a backscatter device (tag) can transmit data to another node (reader) without generating its own carrier, thereby achieving orders-of-magnitude reductions in power consumption and eliminating the need for dedicated hardware and spectrum~\cite{Huynh2018,Bowen-2025}.

The idea of communicating via reflected power dates back to Stockman’s 1948 work~\cite{Stockman-1948}, which proposed replacing the transmitter with a modulated reflector. However, ABC was not demonstrated in practice until much later, using existing TV towers as opportunistic illuminators~\cite{Liu13ambientbackscatter}. Since then, the paradigm has expanded to exploit a variety of ambient sources---including FM radio, cellular, Wi-Fi, and radar signals---and has emerged as a key enabler for batteryless devices in future networks~\cite{Duan-2020,Naser-2023}. This evolution has also led to the concept of symbiotic radio~\cite{Zhang-2020,Liang-2020,Xu-2025}, where passive IoT devices operate parasitically on the signals of a primary system, sharing both spectrum and infrastructure. To ensure the reliability of ABC, advanced physical-layer techniques are required~\cite{Zargari-2023,Rezaei-2023}. Detection is critical for separating weak backscattered signals from the strong direct component of the ambient source and from noise~\cite{Zargari-2023}, and coherent or non-coherent schemes may be needed depending on the availability of channel state information (CSI). Moreover, error-control coding must strike a delicate balance between coding gain to overcome double-path fading and the low-power constraints of tag hardware~\cite{Rezaei-2023}. 

Concurrently, the evolution toward sixth-generation (6G) networks is fostering a paradigm shift from communication-only systems toward Integrated Sensing and Communication (ISAC), which jointly performs radar sensing and wireless data transmission within a shared hardware platform and spectrum, thereby improving spectral efficiency and hardware integration~\cite{Kim-2021,Masouros-2026}.
A key enabler of this convergence is the dual-function transmitter, which simultaneously probes the environment and conveys data to communication receivers.
Beyond this architectural integration, ISAC also enables a tight interplay between sensing and communication: radar sensing can assist with beam alignment and channel estimation, while the communication network can function as a distributed sensor array, resulting in a mutually beneficial ecosystem~\cite{Kim-2021}.

The intersection of these two trends---ABC and ISAC---offers compelling synergies. Dual-function transmitters can enable tags to communicate either with one another or with the ISAC network~\cite{Tian-2024}. This capability can be leveraged to collect distributed measurements from low-cost sensors deployed in the monitored area, allowing the network to learn and adapt to the environment. Moreover, legacy receivers may extract additional identification and positioning information from collaboratively illuminated radar targets (e.g., vehicles, pedestrians, or unmanned systems) equipped with backscatter devices, thereby enhancing the resolution and semantic understanding of the scene~\cite{Cnaan-On-2014,Cnaan-On-2015-journal,Ma-2025,Zargari-2025,Li-2026}. Relevant application scenarios include environmental monitoring, smart-city deployments, vehicular safety-awareness systems, intelligent transportation and delivery, and healthcare. Recent works have also investigated enhancing these systems with Reconfigurable Intelligent Surfaces (RIS) to improve signal coverage and reliability~\cite{BackCom-RIS-2022,Wang-2023,Han-2025,Taremizadeh-2025}. An RIS may be used either as a direct backscatter modulator or to control the cascaded Source-Tag-Reader (STR) channel.

A key characteristic of radar-centric transmitters is the emission of periodic waveforms to facilitate detection and ranging~\cite{Skolnik-book-2001,Levanon-book-2004}. Over time scales shorter than the channel coherence time, reflections from surrounding scatterers (the so-called clutter) preserve the same periodic structure.\footnote{For instance, for a carrier frequency of $24$~GHz and a maximum absolute radial speed of $25$~m/s,  the coherence time is on the order of $1$~ms.} It is shown in~\cite{Asilomar2022,TechDefense-2023,Venturino-2023} that such signals can be exploited for ABC. The main idea is that samples of the ambient carrier spaced one radar period apart are equal, providing the tag with a predictable carrier for data transmission. A filtering mechanism reminiscent of that used in moving-target-indicator (MTI) radars helps the reader distinguish the modulated ambient component from the unmodulated direct interference, even without CSI and knowledge of the radar waveform. Subsequent studies in~\cite{RadarConf-2024, Venturino-2024} have shown that inserting pilot symbols and adopting iterative (semi‑blind) estimation procedures reduce implementation complexity at higher data rates and facilitate the accommodation of multiple tags. Extending this line of research, the work in~\cite{Zeng-2025} considers the problem of jointly localizing multiple targets and detecting symbols from multiple tags without prior knowledge of the radar waveform. 

In this study, we further elaborate on the ideas in~\cite{Venturino-2023,Venturino-2024} and investigate the implementation of  direct communication and ABC with a dual-function radar transmitter. We consider a system in which the radar transmitter repeats a coded pulse over a frame spanning  $L$ pulse repetition intervals (PRIs), the tag exploits the incident clutter to superimpose a message, and the reader---which may be a standalone device or a legacy receiver---aims to recover the messages from both the source and the tag, thus departing from prior works in~\cite{Asilomar2022,TechDefense-2023,Venturino-2023,RadarConf-2024, Venturino-2024,Zeng-2025} that treat the radar signal solely as an ambient illumination source. The main contributions of this study are summarized as follows:
\begin{itemize}
   
    \item \textbf{Joint Direct and Backscatter Communication:}
    We use a train of coded radar pulses to support direct communication with the reader while providing a predictable carrier for tag backscatter. By exploiting the periodic structure of the pulse train over a frame, we derive a discrete‑time matrix representation of the received signal in which the source and tag messages occupy separable fast‑time and slow‑time subspaces. This representation forms the basis for the two signaling schemes introduced next.
    
    \item \text{\bf Pilot-Free Signaling:} We develop a signaling scheme in which the source and tag messages are conveyed through nonlinear vector modulation. Within this framework, we formulate a regularized least-squares (LS) decoding objective that jointly recovers the source and tag codewords and the STR and Source-Reader (SR) channel vectors. We also present a reduced-complexity disjoint decoding strategy in which the tag codeword is recovered first, followed by the source codeword and the STR and SR channel vectors.
    
    \item \text{\bf Pilot-Aided Signaling:} 
    We then introduce a signaling scheme in which the source and tag embed both pilot symbols and linearly modulated data symbols within each frame, enabling either non-iterative decoding based on pilot-derived channel estimates or iterative decoding based on alternating updates of the STR and SR channels and of the source and tag data symbols. The pilot overhead, the choice between discrete and relaxed (continuous) data-symbol updates, and the number of refinement iterations determine the resulting performance--complexity tradeoff.
    
	\item \text{\bf Performance Evaluation:} We establish sufficient conditions on the source and tag codebooks that guarantee noiseless recovery and show that the minimum required length of the source codeword is determined by the delay spreads of the STR and SR channels. Numerical experiments further characterize the main system tradeoffs by reporting the bit error rate (BER) of the source and tag messages and the normalized root-mean-square error (NRMSE) in the estimation of the STR and SR channel vectors as functions of the signal‑to‑noise ratio (SNR), channel sparsity, and the power imbalance between the STR and SR links, also in comparison with relevant benchmarks.
\end{itemize}

The remainder of this manuscript is organized as follows.
Sec.~\ref{SEC_System_description} introduces the system model and underlying assumptions.
Secs.~\ref{SEC_Pilot_Free} and~\ref{SEC_Pilot_Aided} detail the proposed pilot‑free and pilot‑aided signaling schemes, respectively.
Sec.~\ref{SEC_Performance_analysis} presents the performance analysis.
Finally, Sec.~\ref{SEC_Conclusions} offers concluding remarks.

\subsection{Notation}  
In the following, $\mathbb C$  denotes the set of complex numbers. Column vectors and matrices are represented by lowercase and uppercase boldface letters, respectively. 
The symbols $\Re(\,\cdot\,)$, $(\,\cdot\,)^T$, $(\,\cdot\,)^{*}$, and $(\,\cdot\,)^H$ denote the real part, transpose, conjugate, and Hermitian (conjugate-transpose) operators, respectively.  The expressions $[\bm{A}_{1}, \, \cdots ,\bm{A}_{n}]$ and $[\bm{A}_{1}; \, \cdots ; \bm{A}_{n}]$ denote the horizontal and vertical concatenation of matrices  $\bm{A}_{1},\ldots,\bm{A}_{n}$, respectively. The operator $\mathrm{diag}\{\bm{A}_{1}, \ldots,  \bm{A}_{n}\}$ constructs a block‑diagonal matrix from its matrix arguments $\bm{A}_{1},\ldots,\bm{A}_{n}$. The matrix   $\bm{I}_{M}$ is the $M\times M$ identity matrix.  For any matrix $\bm{A}$, $\bm{A}^{\dag}$ denotes its Moore-Penrose pseudoinverse, $\|\bm{A}\|_{F}$ its Frobenius norm, $\mathrm{rank}\{\bm{A}\}$ its rank,  $\mathrm{col}\{\bm{A}\}$ its column space, $[\bm{A}]_{i,j}$ its $(i,j)$‑th entry, and $[\bm{A}]_{i,:}$ its $i$‑th row, while $\bm{A}_{a:b,:}$ extracts rows $a$ to $b$ and $\mathrm{vec}\{\bm{A}\}$ stacks its columns into a vector. For any vector $\bm{a}$, $\bm{a}_{a:b}$ denotes its subvector of entries $a$ to $b$, $\|\bm{a}\|$ its Euclidean norm, and $\|\bm{a}\|_1$ its $1$-norm. The vectors $\bm{1}_{M}$ and $\bm{0}_{M}$ denote the $M$‑dimensional all‑ones and all‑zeros column vectors.    Finally, the symbols $\imag$, $\star$, $\otimes$, and $\mathrm{E}[\,\cdot\,]$ denote the imaginary unit, the convolution operator, the Kronecker product, and statistical expectation, respectively.

\section{System Description}\label{SEC_System_description}
Consider a system composed of a dual-function radar transmitter, a tag, and a reader. The radar transmitter (hereafter referred to as the ambient source) operates at carrier frequency $f$ and emits coded pulses characterized by bandwidth $W$, duration $T$, and repetition interval $T_{\mathrm{PRI}}$. The bandwidth $W$ determines the radar's delay resolution, given by $\Delta=1/W$. We assume $T=N\Delta$ and $T_{\mathrm{PRI}}=N_{\mathrm{PRI}}\Delta$, where $N$  and $N_{\mathrm{PRI}}$ are positive integers representing the pulse time-bandwidth product and the number of unambiguous delay bins, respectively. The ambient source organizes its transmissions into frames, each consisting of $L$ PRIs and embeds its message by varying the coded pulse across frames~\cite{Stiles-2010,Tedesso-2018,Hassanien-2019}. The direct signal from the ambient source and its environmental echoes are exploited by the tag as a carrier signal~\cite{Venturino-2023,Venturino-2024}. The reader aims to decode the message from both the ambient source and the tag. 

\subsection{Baseband Signal of the Ambient Source}
Let $\mathsf{C} \subset \mathbb{C}^{N}$ denote the set of codewords of the ambient source. These codewords should exhibit good autocorrelation properties to ensure high range resolution and low sidelobe levels in radar sensing~\cite{Skolnik-book-2001,Levanon-book-2004}. Up to a scaling factor that determines the transmit energy over a PRI (and that is later absorbed into the unknown channel responses), the baseband coded pulse corresponding to a codeword $\bm{c}=[c_{0}; \, \cdots ;c_{N-1}] \in \mathsf{C}$ is defined as
\begin{equation}\label{eq_radar_waveform}
	\xi_{\bm{c}}(t)=\sum_{n=0}^{N-1}c_{n}\Pi\left(\frac{t-n\Delta}{\Delta}\right),
\end{equation}
where $\Pi(t)$ is a rectangular pulse equal to $1$ for $t\in[0,1)$ and $0$ otherwise. The baseband signal transmitted by the ambient source in the $\ell$-th frame interval $[\ell L T_{\mathrm{PRI}},(\ell +1)LT_{\mathrm{PRI}})$ is
\begin{equation} \label{radar_signal_frame_ell}
	\xi^{(\ell)}(t)=\sum_{p=0}^{L-1}
	\xi_{\bm{c}^{(\ell)}}\big(t-(\ell L+p)T_{\mathrm{PRI}}\big),
\end{equation} 
where $\bm{c}^{(\ell)} \in \mathsf{C}$ is the codeword used in the $\ell$-th frame. The source transmission rate is
\begin{equation}\label{source_rate}
	\mathcal{R}_{\mathrm{S}}=\log_2|\mathsf{C}|\quad   \mathrm{[bit/frame]}.
\end{equation}

\subsection{Design Assumptions}
We adopt the following assumptions:
\begin{itemize}		
	\item[(A1)]	The tag and the reader know the PRI and the start and duration of the frame interval.
	
	\item[(A2)] The SR, Source-Tag (ST), and Tag-Reader (TR) channels remain constant within a frame interval. 
	
	\item[(A3)] Signal components originating from a given PRI reach the reader before the transmission of the subsequent radar pulse, either via the SR link or through the STR cascade.
    
	\item[(A4)] The reader is aware of the minimum and maximum propagation delays of the received signals, whether via the SR link or through the STR cascade.
\end{itemize}

Assumption~A1 holds if the radar periodically transmits beacons received by both the tag and the reader, thereby announcing the PRI and the start and duration of each subsequent frame interval.

Assumption~A2 is satisfied if 
\begin{align} \label{a2_satisfy_condi}
LT_{\mathrm{PRI}} \nu_{\max}\ll 1,
\end{align}
where $\nu_{\max}\geq 0$ denotes the maximum absolute Doppler shift induced by the propagation environment, including any carrier frequency offset between the radar and the reader. 

To proceed, let  $\bm{\gamma}^{(\ell)}_{\mathrm{SR}}(t)$, $\bm{\gamma}^{(\ell)}_{\mathrm{ST}}(t)$, and $\bm{\gamma}^{(\ell)}_{\mathrm{TR}}(t)$ denote the unknown baseband impulse responses of the SR, ST, and TR channels in the $\ell$-th frame, respectively. The impulse responses $\bm{\gamma}^{(\ell)}_{\mathrm{SR}}(t)$ and $\bm{\gamma}^{(\ell)}_{\mathrm{TR}}(t)$ also account for the effect of the low-pass filter employed by the reader to suppress out-of-band noise. Furthermore, let $Q_{\min}\Delta$ and $Q_{\max}\Delta$ represent the earliest and latest possible arrival times of signal components at the reader, either via the SR link or through the STR cascade. Here, $Q_{\min}$  and $Q_{\max}$  are non-negative integers characterizing the multipath delay spread, with $Q=Q_{\max}-Q_{\min}\geq0$. 

Assumption~A3 is satisfied if
\begin{equation}
	N_{\mathrm{PRI}}>N+Q_{\max}. \label{timing}
\end{equation}
This is a mild requirement, since most systems have $N_{\mathrm{PRI}} \gg N$; for instance, radar systems designed for unambiguous range estimation typically adopt a PRI that is significantly longer than the maximum expected propagation delay~\cite{Levanon-book-2004}. Furthermore, to ensure a favorable link budget, the tag is preferably placed in close proximity to the ambient source, with the reader located nearby as well; this spatial configuration naturally reduces the impact of distant scatterers~\cite{Huynh2018,Bowen-2025}. 

Finally, under Assumption A4, the reader knows $Q_{\min}$ and $Q_{\max}$. Indeed, the reader can estimate these parameters by leveraging the reference timing provided by the beacons and measuring the power delay profile of the received signal.

\subsection{Signal Backscattered by the Tag}
The signal controlling the tag’s reflection coefficient during the $\ell$-th frame is modeled as~\cite{Tellambura-2016}
\begin{equation}\label{tag_message}
	\chi^{(\ell)}(t)=\sum_{p=0}^{L-1} x_{p}^{(\ell)} \Pi\left(\frac{t-(\ell L+p)T_{\mathrm{PRI}}}{T_{\mathrm{PRI}}}\right),
\end{equation}
where $x^{(\ell)}_{p}$ is the symbol transmitted by the tag in the $p$-th PRI of the $\ell$-th frame and $\bm{x}^{(\ell)} = \left[x_{0}^{(\ell)}; \, \cdots ; x_{L - 1}^{(\ell)}\right]$ denotes the codeword transmitted by the tag in the $\ell$-th frame. We assume that $\bm{x}^{(\ell)}$ is drawn from a unimodular  codebook $\mathsf{X}\subset \mathbb{C}^{L}$. Hence, the tag transmission rate is 
\begin{equation}\label{tag_rate}
	\mathcal{R}_{\mathrm{T}}=\log_2|\mathsf{X}|\quad   \mathrm{[bit/frame]}.
\end{equation}

The baseband signal backscattered by the tag during the $\ell$-th frame is
\begin{equation}\label{tag_backscattered_signal}
	\beta^{(\ell)}(t) =\Big[\xi^{(\ell)}(t)\star \bm{\gamma}^{(\ell)}_{\mathrm{ST}}(t)\Big] \chi^{(\ell)}(t),
\end{equation}
where 
$\xi^{(\ell)}(t)\star \bm{\gamma}^{(\ell)}_{\mathrm{ST}}(t)$ represents the incident ambient carrier. Substituting~\eqref{radar_signal_frame_ell} and~\eqref{tag_message} into~\eqref{tag_backscattered_signal} and exploiting~\eqref{timing} yield
\begin{equation}\label{tag_backscattered_signal_2}
	\beta^{(\ell)}(t) =\sum_{p=0}^{L-1} x^{(\ell)}_{p} \Big[\xi_{\bm{c}^{(\ell)}}\big(t-(\ell L+p)T_{\mathrm{PRI}}\big)\star \bm{\gamma}^{(\ell)}_{\mathrm{ST}}(t)\Big].
\end{equation}

\subsection{Signal Received by the Reader}
The baseband signal received by the reader during the $\ell$-th frame is
\begin{equation}
	\bm{y}^{(\ell)}(t)=a^{(\ell)}_{\mathrm{STR}}(t)+a^{(\ell)}_{\mathrm{SR}}(t)+\omega^{(\ell)}(t).
	\label{reader_rx_signal}
\end{equation}
In the above expression, 
\begin{align}
	a^{(\ell)}_{\mathrm{STR}}(t)&=\beta^{(\ell)}(t)\star \bm{\gamma}^{(\ell)}_{\mathrm{TR}}(t)\notag\\
	&=\sum_{p=0}^{L-1} x^{(\ell)}_{p} \Big[\xi_{\bm{c}^{(\ell)}}\big(t-(\ell L+p)T_{\mathrm{PRI}}\big)\star \bm{\gamma}^{(\ell)}_{\mathrm{ST}}(t)\star \bm{\gamma}^{(\ell)}_{\mathrm{TR}}(t)\Big] \label{reader_rx_STR}
\end{align}  
is the altered radar signal reaching the reader after passing through the ST channel, being modulated by the tag, and then passing through the TR channel. Similarly, 
\begin{align}
	a^{(\ell)}_{\mathrm{SR}}(t)&=\xi^{(\ell)}(t)\star \bm{\gamma}^{(\ell)}_{\mathrm{SR}}(t)\notag\\
	&=\sum_{p=0}^{L-1} \xi_{\bm{c}^{(\ell)}}\big(t-(\ell L+p)T_{\mathrm{PRI}}\big)\star \bm{\gamma}^{(\ell)}_{\mathrm{SR}}(t)  \label{reader_rx_SR}
\end{align}
is the filtered radar signal reaching the reader after passing through the SR channel that also includes any unmodulated signal component resulting from the structural mode scattering of the tag~\cite{BackCom-RIS-2022}. Finally, $\omega^{(\ell)}(t)$ is the noise. 

\begin{remark}\label{Remark_reader}
	The signal components in~\eqref{reader_rx_STR} and~\eqref{reader_rx_SR} vanish whenever $t\notin \mathcal{T}^{(\ell)}$, where
	\begin{multline}
		\mathcal{T}^{(\ell)}=\bigcup_{p=0}^{L-1} \Big[(\ell L+p) T_{\mathrm{PRI}}+Q_{\min} \Delta, \\[-10pt] (\ell L+p) T_{\mathrm{PRI}}+(Q_{\max}+N) \Delta\Big). 
	\end{multline}
	Within the $\ell$-th frame interval, the timing constraint in~\eqref{timing} implies that there is no inter-PRI interference among the tag symbols $\{x^{(\ell)}_{p}\}_{p=0}^{L-1}$, with each symbol modulating a delayed version of the same waveform $\xi_{\bm{c}^{(\ell)}}(t)\star \bm{\gamma}^{(\ell)}_{\mathrm{ST}}(t)\star \bm{\gamma}^{(\ell)}_{\mathrm{TR}}(t)$. Moreover, the signal in~\eqref{reader_rx_SR} consists of $L$ identical cycles of the waveform $\xi_{\bm{c}^{(\ell)}}(t)\star \bm{\gamma}^{(\ell)}_{\mathrm{SR}}(t)$, each of duration $T_{\mathrm{PRI}}$.
\end{remark}

\subsection{Discrete-Time Signal Model}
Without loss of generality, we focus on the frame interval $[0, LT_{\mathrm{PRI}}]$ and omit next the index $\ell=0$. Exploiting Remark~\ref{Remark_reader}, the received signal in~\eqref{reader_rx_signal} is sampled at epochs $p T_{\mathrm{PRI}}+Q_{\min} \Delta + k \Delta$, for $p=0,\ldots, L-1$  and $k=0,\ldots,K-1$, where $K=N+Q$ is the number of fast-time samples per PRI. These samples are arranged into the $L \times K$ matrix $\bm{Y}$, where each row corresponds to time epochs spaced by one Nyquist interval within the same PRI (fast-time) and each column corresponds to time epochs spaced by one PRI  within the same frame (slow-time). The matrix $\bm{Y}$ is expressed as~\cite{Venturino-2023}
\begin{equation}\label{reader_discrete_signal}
	\bm Y=\underbrace{\bm{x}\bm{\alpha}_{\bm{c},\mathrm{STR}}^T	+\bm{1}_L\bm{\alpha}_{\bm{c},\mathrm{SR}}^T}_{\bm{\Sigma}}+\bm{\Omega}, 
\end{equation}
where:
\begin{itemize}
	\item $\bm{\alpha}_{\bm{c},\mathrm{STR}}\in\mathbb{C}^{K}$ contains the fast-time samples of 
	\begin{equation}\label{filtered_pulse_STR}
		\xi_{\bm{c}}(t)\star \gamma_{\mathrm{ST}}(t) \star\gamma_{\mathrm{TR}}(t),
	\end{equation}
	taken at epochs $Q_{\min} \Delta+ k\Delta$, for $k=0,\ldots,K-1$;
	
	\item  $\bm{\alpha}_{\bm{c},\mathrm{SR}}\in\mathbb{C}^{K}$ contains the fast-time samples of 
	\begin{equation}\label{filtered_pulse_SR}
		\xi_{\bm{c}}(t)\star\gamma_{\mathrm{SR}}(t),
	\end{equation}
	taken at epochs $Q_{\min} \Delta+k\Delta$, for $k=0,\ldots,K-1$;	

    \item $\bm{\Sigma}\in\mathbb{C}^{L \times K}$ contains the superposition of the signal components received via the STR and SR links;
    
	\item $\bm{\Omega}\in\mathbb{C}^{L \times K}$  contains the noise samples.
\end{itemize}

 We refer to $\bm{\alpha}_{\bm{c},\mathrm{STR}}$ and $\bm{\alpha}_{\bm{c},\mathrm{SR}}$ as the STR and SR response vectors, respectively, since they collect the samples of the source coded pulse after propagation through the corresponding multipath channels. Next, we make their dependence on the source codeword and the channel taps explicit. 

Consider first the STR response vector. The composite channel impulse response  $\gamma_{\mathrm{STR}}(t)=\gamma_{\mathrm{ST}}(t)\star\gamma_{\mathrm{TR}}(t)$  can be modeled as a tapped-delay line~\cite{Proakis-book}:
\begin{equation}\label{tapped-delay-line-model-STR}
	\gamma_{\mathrm{STR}}(t)=\sum_{q=0}^{Q} 	\gamma_{{\mathrm{STR}},q} \delta\big(t-Q_{\min}\Delta-q\Delta\big),
\end{equation}
where $\gamma_{{\mathrm{STR}},q}$ denotes the complex amplitude of the $q$-th tap. Using~\eqref{filtered_pulse_STR} and~\eqref{tapped-delay-line-model-STR}, the vector $\bm{\alpha}_{\bm{c},\mathrm{STR}}$ can be expanded as
\begin{equation}
	\xi_{\bm{c}}(t)\star \gamma_{\mathrm{STR}}(t)\Big|_{t=Q_{\min}\Delta+k\Delta}= \sum_{q=0}^{Q}  c_{k-q} \gamma_{{\mathrm{STR}},q},
\end{equation}
for $k=0,\ldots,K-1$, with $c_{j}=0$ if $j\notin\{0,\ldots,N-1\}$.  Hence,  $\bm{\alpha}_{\bm{c},\mathrm{STR}}$ admits the following representation
\begin{equation} \label{str_chan_model}
	\bm{\alpha}_{\bm{c},\mathrm{STR}}=\bm{\Xi}_{\bm{c}}\bm{\gamma}_{\mathrm{STR}},
\end{equation}
where $\boldsymbol{\Xi}_{\boldsymbol{c}} \in \mathbb{C}^{K \times (Q+1)}$ is the convolution matrix generated by $\bm{c}$, whose first column is $[\boldsymbol{c}; \boldsymbol{0}_{K-N}]$, and $\boldsymbol{\gamma}_{\mathrm{STR}} = [\gamma_{\mathrm{STR},0}; \, \cdots; \gamma_{\mathrm{STR},Q}] \in \mathbb{C}^{Q+1}$ is the vector of channel taps. 

A similar characterization holds for the SR response vector. The channel impulse response $\gamma_{\mathrm{SR}}(t)$ can be  modeled as
\begin{equation}\label{tapped-delay-line-model-SR}
	\gamma_{\mathrm{SR}}(t)=\sum_{q=0}^{Q} 	\gamma_{{\mathrm{SR}},q} \delta\big(t-Q_{\min}\Delta-q\Delta\big),
\end{equation}
where $\gamma_{{\mathrm{SR}},q}$ denotes the complex amplitude of the $q$-th tap. Using~\eqref{filtered_pulse_SR} and~\eqref{tapped-delay-line-model-SR}, we obtain:
\begin{equation} \label{sr_chan_model}	\bm{\alpha}_{\bm{c},\mathrm{SR}}=\bm{\Xi}_{\bm{c}}\bm{\gamma}_{\mathrm{SR}},
\end{equation}
where $\bm{\gamma}_{\mathrm{SR}}=[\gamma_{{\mathrm{SR}},0}; \, \cdots ;\gamma_{{\mathrm{SR}},Q}]\in\mathbb{C}^{Q+1}$.

Equations~\eqref{str_chan_model} and~\eqref{sr_chan_model} show that the STR and SR response vectors are convolution‑induced subspace signals;  specifically, they are linear combinations of $Q+1$ delayed replicas of the source codeword weighted by the corresponding channel taps.

\subsection{Structural Insights}
Based on the observation of $\bm{Y}$ and on the knowledge of the codebooks $\mathsf{C}$ and $\mathsf{X}$, the reader must be able to recover the source codeword $\bm{c}$ and the tag codeword $\bm{x}$. Notice that the source codeword determines the column space of the matrix $\bm{\Xi}_{\bm{c}}$, which in turn contains the row space of the signal matrix $\bm{\Sigma}$, i.e., $\mathrm{col}\{\bm{\Sigma}^{T}\} \subseteq \mathrm{col}\{\bm{\Xi}_{\bm{c}}\}$. Instead, the tag codeword fully specifies the column space of $\bm{\Sigma}$, since $\mathrm{col}\{\bm{\Sigma}\} = \mathrm{col}\{[\bm{x},\,\bm{1}_L]\}$. If $\bm{x} \not\propto \bm{1}_L$, the STR and SR components remain separable in slow time. This subspace structure suggests that the source and tag messages can be embedded in—and subsequently recovered from—the column spaces of the matrices $\bm{\Xi}_{\bm{c}}$ and $[\bm{x},\,\bm{1}_L]$, respectively. In Secs.~\ref{SEC_Pilot_Free} and~\ref{SEC_Pilot_Aided}, we exploit this structure to develop two encoding schemes and their associated decoders.

\section{Pilot-Free Signaling} \label{SEC_Pilot_Free}
In this signaling scheme, source and tag messages are conveyed through nonlinear vector modulation. This avoids the use of pilot symbols, but strictly couples channel estimation and codeword detection.
To facilitate separability of the signal components observed through the STR and SR links, we follow the approach in~\cite{Venturino-2023} and impose the following additional constraint on the codebook $\mathsf{X}$:
\begin{equation}\label{orhogonality_condition}
	\bm{x}^H\bm{1}_L=0, \quad \forall \bm{x}\in\mathsf{X}.
\end{equation}

In the following, we first derive sufficient conditions for unique noiseless recovery of all unknown quantities, providing insights into codebook design.  We then present two decoding strategies. In \emph{joint decoding}, the source and tag codewords are estimated simultaneously, along with the STR and SR channels, whereas in \emph{disjoint decoding}, the tag codeword is first estimated, followed by the estimation of the source codeword and the STR and SR channels. These decoding rules provide different tradeoffs between computational complexity and detection performance.

\subsection{Noiseless Recovery}
The following proposition provides sufficient conditions to guarantee unique noiseless recovery of all unknown quantities.

\begin{proposition}\label{prop-1}
	Assume $|\mathsf{C}|\geq 2$, $|\mathsf{X}|\geq 2$, and 
	$\bm{\gamma}_{\mathrm{STR}},\bm{\gamma}_{\mathrm{SR}}\neq \bm{0}_{Q+1}$.
	Unique noiseless recovery of the source and tag codewords and of the STR and SR channel vectors is achievable if~\eqref{orhogonality_condition} holds and the following conditions are satisfied:
	\begin{subequations}\label{separability}
		\begin{align}
			\mathrm{rank}\big([\bm{x}_1,\bm{x}_2]\big)=2, 
			&\quad \forall \bm{x}_1,\bm{x}_2\in\mathsf{X},\;\bm{x}_1\neq \bm{x}_2, 
			\label{separability_tag}\\
			\mathrm{rank}\big([\bm{\Xi}_{\bm{c}_1},\bm{\Xi}_{\bm{c}_2}]\big)=2(Q+1), 
			&\quad \forall \bm{c}_1,\bm{c}_2\in\mathsf{C},\;\bm{c}_1\neq \bm{c}_2. 
			\label{separability_source}			
		\end{align}
	\end{subequations}
\end{proposition}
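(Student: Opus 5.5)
Suppose two admissible tuples $(\bm{c}_1,\bm{x}_1,\bm{\gamma}_{\mathrm{STR},1},\bm{\gamma}_{\mathrm{SR},1})$ and $(\bm{c}_2,\bm{x}_2,\bm{\gamma}_{\mathrm{STR},2},\bm{\gamma}_{\mathrm{SR},2})$ produce the same noiseless signal. I will show they coincide, proceeding hierarchically: first the slow-time structure gives $\bm{x}$, then the SR term, then the fast-time structure gives $\bm{c}$, and finally the channels. The starting identity is
\begin{equation*}
\bm{x}_1\bm{\gamma}_{\mathrm{STR},1}^T\bm{\Xi}_{\bm{c}_1}^T+\bm{1}_L\bm{\gamma}_{\mathrm{SR},1}^T\bm{\Xi}_{\bm{c}_1}^T
=\bm{x}_2\bm{\gamma}_{\mathrm{STR},2}^T\bm{\Xi}_{\bm{c}_2}^T+\bm{1}_L\bm{\gamma}_{\mathrm{SR},2}^T\bm{\Xi}_{\bm{c}_2}^T .
\end{equation*}
A preliminary fact, used repeatedly, is that each $\bm{\Xi}_{\bm{c}}$ has full column rank $Q+1$ (this follows from \eqref{separability_source}, or directly since $\bm{c}\neq\bm{0}$ makes the convolution matrix triangular-like with nonzero leading entry). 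Hence $\bm{\alpha}_{\bm{c},\mathrm{STR}}=\bm{\Xi}_{\bm{c}}\bm{\gamma}_{\mathrm{STR}}\neq\bm{0}_K$, and likewise for the SR term.

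\emph{Step 1 (SR component and tag codeword).} I left-multiply the identity by $\bm{1}_L^H/L$. By \eqref{orhogonality_condition} the STR terms vanish, which gives $\bm{\Xi}_{\bm{c}_1}\bm{\gamma}_{\mathrm{SR},1}=\bm{\Xi}_{\bm{c}_2}\bm{\gamma}_{\mathrm{SR},2}$. Subtracting this from the identity leaves $\bm{x}_1\bm{\alpha}_1^T=\bm{x}_2\bm{\alpha}_2^T$, where $\bm{\alpha}_i=\bm{\Xi}_{\bm{c}_i}\bm{\gamma}_{\mathrm{STR},i}\neq\bm{0}$. Both sides are rank-one matrices, so their column spaces must agree: $\bm{x}_1\propto\bm{x}_2$. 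If $\bm{x}_1\neq\bm{x}_2$, then \eqref{separability_tag} says they are linearly independent, a contradiction. Therefore $\bm{x}_1=\bm{x}_2$. Because the codeword is unimodular and hence nonzero, it follows that $\bm{\alpha}_1=\bm{\alpha}_2$.

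\emph{Step 2 (source codeword).} Now $\bm{\Xi}_{\bm{c}_1}\bm{\gamma}_{\mathrm{STR},1}-\bm{\Xi}_{\bm{c}_2}\bm{\gamma}_{\mathrm{STR},2}=\bm{0}$, i.e., $[\bm{\Xi}_{\bm{c}_1},\bm{\Xi}_{\bm{c}_2}][\bm{\gamma}_{\mathrm{STR},1};-\bm{\gamma}_{\mathrm{STR},2}]=\bm{0}$. If $\bm{c}_1\neq\bm{c}_2$, then \eqref{separability_source} makes this matrix full column rank, which forces $\bm{\gamma}_{\mathrm{STR},1}=\bm{0}$ and contradicts the assumption $\bm{\gamma}_{\mathrm{STR}}\neq\bm{0}_{Q+1}$. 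Hence $\bm{c}_1=\bm{c}_2$. The SR relation from Step 1 would serve equally well here, since $\bm{\gamma}_{\mathrm{SR}}\neq\bm{0}$ as well.

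\emph{Step 3 (channels).} With $\bm{c}_1=\bm{c}_2=\bm{c}$, full column rank of $\bm{\Xi}_{\bm{c}}$ applied to $\bm{\Xi}_{\bm{c}}(\bm{\gamma}_{\mathrm{STR},1}-\bm{\gamma}_{\mathrm{STR},2})=\bm{0}$ and to the analogous SR relation gives equality of both channel vectors.

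The main subtlety I expect is Step 1. Rank-one uniqueness only yields proportionality of $\bm{x}_1$ and $\bm{x}_2$, so \eqref{separability_tag} is exactly what is needed to exclude distinct but collinear codewords; this is the point where the scaling ambiguity between $\bm{x}$ and the channel is removed. I would also briefly note that $|\mathsf{C}|,|\mathsf{X}|\geq2$ only serve to make the pairwise conditions nonvacuous.
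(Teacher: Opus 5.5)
Your proof is correct, but it follows a different route from the paper. You prove injectivity directly. Assuming two admissible tuples give the same noiseless $\bm{Y}$, you project onto $\bm{1}_L$ to isolate the SR response vector. You then read $\bm{x}$ off the column space of the remaining rank-one matrix, with \eqref{separability_tag} excluding distinct collinear codewords. Finally, the trivial intersection of $\mathrm{col}\{\bm{\Xi}_{\bm{c}_1}\}$ and $\mathrm{col}\{\bm{\Xi}_{\bm{c}_2}\}$ implied by \eqref{separability_source} pins down $\bm{c}$ and then the channels.

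The paper instead fixes the true tuple and scores every candidate pair $(\bar{\bm{c}},\bar{\bm{x}})$ with the projection-energy metric \eqref{proof_step_1}. It bounds this metric by $L\|\bm{\Xi}_{\bm{c}}\bm{\gamma}_{\mathrm{STR}}\|^2+L\|\bm{\Xi}_{\bm{c}}\bm{\gamma}_{\mathrm{SR}}\|^2$ using two facts: Cauchy--Schwarz, which is where the equal energy (unimodularity) of the tag codewords is needed, and the contraction property of the orthogonal projector $\bm{\Xi}_{\bar{\bm{c}}}\bm{\Xi}_{\bar{\bm{c}}}^{\dag}$. It then shows the bound is attained only at the true pair.

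Your route is more elementary and slightly more general: it uses unimodularity only to guarantee $\bm{x}\neq\bm{0}$. Its order of operations (SR component, then tag, then source) also echoes the tag-first, source-second structure of the disjoint decoder in \eqref{disjoint_tag_decoding}--\eqref{disjoint_radar_decoding_L2}. The paper's route buys an explicit joint recovery rule. The metric it maximizes is exactly the $\lambda_{\mathrm{STR}}=\lambda_{\mathrm{SR}}=0$ instance of the joint decoder \eqref{L2reg-data-estimate}, so the identifiability proof doubles as the derivation of that decoder.
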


\begin{proof}
	Substituting~\eqref{str_chan_model} and~\eqref{sr_chan_model} into~\eqref{reader_discrete_signal}, the received noiseless signal can be expressed as
	\begin{equation}
		\bm Y=\bm{x}(\bm{\Xi}_{\bm{c}}\bm{\gamma}_{\mathrm{STR}})^T
		+\bm{1}_L(\bm{\Xi}_{\bm{c}}\bm{\gamma}_{\mathrm{SR}})^T.
	\end{equation}	
	For any candidate pair $(\bar{\bm{c}},\bar{\bm{x}})\in\mathsf{C}\times\mathsf{X}$, we have
	\begin{subequations}
		\begin{align}
			&\frac{1}{L}\|\bm{\Xi}_{\bar{\bm{c}}} \bm{\Xi}^{\dag}_{\bar{\bm{c}}}\bm{Y}^T \bar{\bm{x}}^*\|^2 + \frac{1}{L}\|\bm{\Xi}_{\bar{\bm{c}}} \bm{\Xi}^{\dag}_{\bar{\bm{c}}} \bm{Y}^T \bm{1}_L^*\|^2  \label{proof_step_1} \\
			&= \frac{1}{L}\|\bm{\Xi}_{\bar{\bm{c}}} \bm{\Xi}^{\dag}_{\bar{\bm{c}}}\bm{\Xi}_{\bm{c}}\bm{\gamma}_{\mathrm{STR}}\|^2 | \bm{x}^T\bar{\bm{x}}^* |^2 +  L \|\bm{\Xi}_{\bar{\bm{c}}} \bm{\Xi}^{\dag}_{\bar{\bm{c}}}\bm{\Xi}_{\bm{c}}\bm{\gamma}_{\mathrm{SR}}\|^2 \label{proof_step_2}\\
			&\leq L  \|\bm{\Xi}_{\bar{\bm{c}}} \bm{\Xi}^{\dag}_{\bar{\bm{c}}}\bm{\Xi}_{\bm{c}}\bm{\gamma}_{\mathrm{STR}}\|^2  +  L  \|\bm{\Xi}_{\bar{\bm{c}}} \bm{\Xi}^{\dag}_{\bar{\bm{c}}}\bm{\Xi}_{\bm{c}}\bm{\gamma}_{\mathrm{SR}}\|^2 \label{proof_step_3}\\
			&\leq L \|\bm{\Xi}_{\bm{c}}\bm{\gamma}_{\mathrm{STR}}\|^2 +  L \|\bm{\Xi}_{\bm{c}}\bm{\gamma}_{\mathrm{SR}}\|^2, \label{proof_step_4}
		\end{align}	
	\end{subequations}
	where the equality in~\eqref{proof_step_2} follows from~\eqref{orhogonality_condition}; the upper bound in~\eqref{proof_step_3} follows from the Cauchy-Schwarz inequality and from the fact that $\bm{x}$ and $\bar{\bm{x}}$ are unimodular; finally, the upper bound in~\eqref{proof_step_4} follows from the fact that $\bm{\Xi}_{\bar{\bm{c}}} \bm{\Xi}^{\dag}_{\bar{\bm{c}}}$ is the orthogonal projector onto $\mathrm{col}\{\bm{\Xi}_{\bar{\bm{c}}}\}$. Condition~\eqref{separability_tag} ensures that
	\begin{equation}\label{fact-1}
		\left|\bm{x}^T\bar{\bm{x}}^*\right|^2\begin{cases}
			=L^2, & \mathrm{if}\; \bar{\bm{x}} = \bm{x},\\
			<L^2, & \mathrm{if}\; \bar{\bm{x}} \neq \bm{x}.
		\end{cases}
	\end{equation}
	Moreover, Condition~\eqref{separability_source} implies
	\begin{equation}
		\mathrm{col}(\bm{\Xi}_{\bm{c}})\cap \mathrm{col}(\bm{\Xi}_{\bar{\bm{c}}})=\{\bm{0}_{K}\}, 
		\quad \forall \bar{\bm{c}},\bm{c}\in\mathsf{C},\;\bar{\bm{c}} \neq \bm{c}.
	\end{equation}
	which in turn yields
	\begin{equation}\label{proof_E}
		\|\bm{\Xi}_{\bar{\bm{c}}} \bm{\Xi}^{\dag}_{\bar{\bm{c}}}\bm{\Xi}_{\bm{c}}\bm{\gamma}\|^2 
		\begin{cases}
			=\|\bm{\Xi}_{\bm{c}}\bm{\gamma}\|^2, & \text{if } \bar{\bm{c}} = \bm{c},\\
			<\|\bm{\Xi}_{\bm{c}}\bm{\gamma}\|^2, & \text{if } \bar{\bm{c}} \neq \bm{c},
		\end{cases}
	\end{equation}
	for any non-zero vector $\bm{\gamma}\in\mathbb{C}^{Q+1}$.
	Therefore, the upper bound in~\eqref{proof_step_4} is attained only if $\bar{\bm{x}} = \bm{x}$ and $\bar{\bm{c}} = \bm{c}$. This establishes unique noiseless recovery of the source and tag codewords. 	
	
	Finally,~\eqref{separability_source} implies $\mathrm{rank}(\bm{\Xi}_{\bm{c}})=Q+1$ for any $\bm{c}\in\mathsf{C}$. Hence, once the source and tag codewords have been recovered, the STR and SR channel vectors are uniquely obtained as $\bm{\gamma}_{\mathrm{STR}}= \frac{1}{L}\bm{\Xi}^\dag_{\bm{c}} \bm{Y}^T \bm{x}^*$ and $\bm{\gamma}_{\mathrm{SR}}= \frac{1}{L}\bm{\Xi}^\dag_{\bm{c}} \bm{Y}^T \bm{1}_L^*$.
\end{proof}

Proposition~\ref{prop-1} shows that noiseless identifiability is fundamentally a subspace-separation problem.
For the tag, distinct codewords must span different one-dimensional subspaces to be uniquely identifiable, in agreement with~\cite{Venturino-2023} and regardless of whether the reader knows the source codebook. 
Since the tag codewords lie in an $L$-dimensional space, Conditions (22) and (23a) require at least three linearly independent vectors, which implies $L \ge 3$.
For the source, message recovery hinges on the fact that different codewords induce response vectors lying in disjoint subspaces. Since $[\bm{\Xi}_{\bm{c}_1},\bm{\Xi}_{\bm{c}_2}] \in \mathbb{C}^{K\times 2(Q+1)}$, Condition~\eqref{separability_source} requires $K \ge 2(Q+1)$. Recalling that $K=N+Q$, this implies $N\ge Q+2$.

\subsection{Joint Decoding}
The proposed rule builds upon the constructive arguments used in the proof of Proposition~\ref{prop-1}. In particular, the decision metric~\eqref{proof_step_1} is extended to the noisy case through regularized LS estimation of the channel vectors. Using \eqref{reader_discrete_signal}, \eqref{str_chan_model}, and \eqref{sr_chan_model}, the joint decoding problem can be formulated as
\begin{equation}\label{free_opt}
	\min_{\substack{ \bm{c}\in\mathsf{C}\\ \bm{x}\in \mathsf{X}}} 
	\min_{\substack{\bm{\gamma}_{\mathrm{STR}} \in \mathbb{C}^{Q+1} \\ \bm{\gamma}_{\mathrm{SR}} \in \mathbb{C}^{Q+1}}}
	F(\bm{c},\bm{x},\bm{\gamma}_{\mathrm{STR}},\bm{\gamma}_{\mathrm{SR}}),
\end{equation}
where
\begin{multline}
	\!\!F(\bm{c},\bm{x},\bm{\gamma}_{\mathrm{STR}},\bm{\gamma}_{\mathrm{SR}})
	= \left\|\bm{Y} \!-\! \bm{x}(\bm{\Xi}_{\bm{c}}\bm{\gamma}_{\mathrm{STR}})^T 
	\!-\! \bm{1}_L(\bm{\Xi}_{\bm{c}}\bm{\gamma}_{\mathrm{SR}})^T \right\|_F^2\\
	+ \lambda_{\mathrm{STR}} \, \phi(\bm{\gamma}_{\mathrm{STR}})
	+ \lambda_{\mathrm{SR}} \, \phi(\bm{\gamma}_{\mathrm{SR}}).\label{free_obj}
\end{multline}
Here, $\phi(\cdot)$ is a non-negative regularization function determining the structure of the estimated channel vectors, while $\lambda_{\mathrm{STR}}$ and $\lambda_{\mathrm{SR}}$ are non-negative parameters controlling the tradeoff between data fidelity and regularization.

Using \eqref{separability_tag} and the unimodularity of $\bm{x}$, the objective function in~\eqref{free_obj} can be decomposed as
\begin{align}\label{free_obj_decomp}
		&F(\bm{c},\bm{x},\bm{\gamma}_{\mathrm{STR}},\bm{\gamma}_{\mathrm{SR}})
		= \|\bm{Y}\|_F^2 
		- \frac{1}{L}\|\bm{Y}^T \bm{x}^*\|^2
		- \frac{1}{L}\|\bm{Y}^T \bm{1}_L^*\|^2 \notag \\
		&\quad + L \left\| \frac{1}{L}\bm{Y}^T \bm{x}^* - \bm{\Xi}_{\bm{c}} \bm{\gamma}_{\mathrm{STR}} \right\|^2
		+ \lambda_{\mathrm{STR}} \phi(\bm{\gamma}_{\mathrm{STR}}) \notag \\
		&\quad + L \left\| \frac{1}{L}\bm{Y}^T \bm{1}_L^* - \bm{\Xi}_{\bm{c}} \bm{\gamma}_{\mathrm{SR}} \right\|^2
		+ \lambda_{\mathrm{SR}} \phi(\bm{\gamma}_{\mathrm{SR}}).
\end{align}
This implies that the inner minimization in~\eqref{free_opt} over the variables $\bm{\gamma}_{\mathrm{STR}}$ and $\bm{\gamma}_{\mathrm{SR}}$ decouples into two independent problems. For fixed $(\bm{c},\bm{x})$, the corresponding minimizers are
\begin{subequations}\label{chan_est_general}
	\begin{align}
		\hat{\bm{\gamma}}_{\mathrm{STR}}(\bm{c},\bm{x})
		&= \argmin_{\bm{\gamma}}\left[ 
		L\left\| \frac{1}{L}\bm{Y}^T \bm{x}^* - \bm{\Xi}_{\bm{c}} \bm{\gamma} \right\|^2
		+ \lambda_{\mathrm{STR}}\phi(\bm{\gamma})\right], \\
		\hat{\bm{\gamma}}_{\mathrm{SR}}(\bm{c})
		&= \argmin_{\bm{\gamma}} 
		\left[L\left\| \frac{1}{L}\bm{Y}^T \bm{1}_L^* - \bm{\Xi}_{\bm{c}} \bm{\gamma} \right\|^2
		+ \lambda_{\mathrm{SR}}\phi(\bm{\gamma})\right].
	\end{align}
\end{subequations}
Substituting \eqref{chan_est_general} into \eqref{free_opt} and neglecting irrelevant terms, joint decoding reduces to
\begin{equation}\label{joint_decoding_general}
	(\hat{\bm{c}},\hat{\bm{x}}) 
	= 	\argmin_{\substack{ \bm{c}\in\mathsf{C}\\ \bm{x}\in \mathsf{X}}} 
	\tilde{F}(\bm{c},\bm{x}),
\end{equation}
where
\begin{equation}
	\tilde{F}(\bm{c},\bm{x})= -\frac{1}{L}\|\bm{Y}^T \bm{x}^*\|^2 +	\tilde{F}_{\mathrm{STR}}(\bm{c},\bm{x}) +\tilde{F}_{\mathrm{SR}}(\bm{c}), \label{reduced_obj_general}
\end{equation}
and
\begin{subequations}\label{reduced_obj_general_channels}
	\begin{align}
		\tilde{F}_{\mathrm{STR}}(\bm{c},\bm{x})&=  
		L \left\| \frac{1}{L}\bm{Y}^T \bm{x}^* - \bm{\Xi}_{\bm{c}} \hat{\bm{\gamma}}_{\mathrm{STR}}(\bm{c},\bm{x}) \right\|^2 \notag \\ &\quad+ \lambda_{\mathrm{STR}} \phi\big(\hat{\bm{\gamma}}_{\mathrm{STR}}(\bm{c},\bm{x})\big),\\
		\tilde{F}_{\mathrm{SR}}(\bm{c})&= L \left\| \frac{1}{L}\bm{Y}^T \bm{1}_L^* - \bm{\Xi}_{\bm{c}} \hat{\bm{\gamma}}_{\mathrm{SR}}(\bm{c}) \right\|^2 \notag \\ &\quad + \lambda_{\mathrm{SR}} \phi\big(\hat{\bm{\gamma}}_{\mathrm{SR}}(\bm{c})\big).
	\end{align}
\end{subequations}
The STR and SR channel estimates are then obtained from~\eqref{chan_est_general} by substituting $\bm{c}=\hat{\bm{c}}$ and $\bm{x}=\hat{\bm{x}}$.   

In propagation scenarios with closely spaced scatterers in bistatic range, the channel vectors $\bm{\gamma}_{\mathrm{STR}}$ and $\bm{\gamma}_{\mathrm{SR}}$ tend to be dense. Conversely, when the scatterers are widely separated, the channel vectors may contain only a few dominant entries. In light of these considerations, we consider two choices for the regularization function $\varphi(\cdot)$, namely $\varphi(\bm{\gamma}) = \|\bm{\gamma}\|_2^2$ and $\varphi(\bm{\gamma}) = \|\bm{\gamma}\|_1$, which promote smoothness and sparsity of the channel estimates, respectively.

\subsubsection{$\ell_2$--Regularization}
If $\varphi(\bm{\gamma}) = \|\bm{\gamma}\|_2^2$, the problems in~\eqref{chan_est_general} admit the following solutions
\begin{subequations}\label{L2reg_chan_estimate}
	\begin{align} 
		\hat{\bm{\gamma}}_{\mathrm{STR}}(\bm{c},\bm{x}) &= \left(L \bm{\Xi}^H_{\bm{c}} \bm{\Xi}_{\bm{c}} + \lambda_{\mathrm{STR}} \bm{I}_{Q+1}\right)^{-1} \bm{\Xi}^H_{\bm{c}} \bm{Y}^T \bm{x}^*, \\ 
		\hat{\bm{\gamma}}_{\mathrm{SR}}(\bm{c}) &= \left(L \bm{\Xi}^H_{\bm{c}} \bm{\Xi}_{\bm{c}} + \lambda_{\mathrm{SR}} \bm{I}_{Q+1}\right)^{-1} \bm{\Xi}^H_{\bm{c}} \bm{Y}^T \bm{1}_L^*.
	\end{align}
\end{subequations}
The implementation of the decoding rule in~\eqref{joint_decoding_general} is dominated by the computations required by~\eqref{L2reg_chan_estimate}, and its complexity is
$\mathcal{O}\big(|\mathsf{C}|(KQ^2 + Q^3)+|\mathsf{X}|(KL) + |\mathsf{X}||\mathsf{C}|(KQ)\big)$,
where the first term accounts for the computation of $(L \bm{\Xi}^H_{\bm{c}} \bm{\Xi}_{\bm{c}} + \lambda_i \bm{I}_{Q+1})^{-1}$ for each $\bm{c}\in\mathsf{C}$ and $i\in\{\mathrm{STR},\mathrm{SR}\}$, the second term for the computation of $\bm{Y}^T \bm{x}^*$ for each $\bm{x}\in\mathsf{X}$, and the third term for the remaining matrix-vector operations. If $(L \bm{\Xi}^H_{\bm{c}} \bm{\Xi}_{\bm{c}} + \lambda_i \bm{I}_{Q+1})^{-1}$ can be computed offline and stored for each $\bm{c}\in\mathsf{C}$ and $i\in\{\mathrm{STR},\mathrm{SR}\}$, the online complexity reduces to $\mathcal{O}\big(|\mathsf{X}|(KL) + |\mathsf{X}||\mathsf{C}|(KQ)\big)$.

An equivalent expression for the decoding rule is obtained by substituting~\eqref{L2reg_chan_estimate} into~\eqref{reduced_obj_general} and neglecting irrelevant terms:
\begin{multline} \label{L2reg-data-estimate}
	[\hat{\bm{c}},\hat{\bm{x}}]=\\ \argmax_{\substack{ \bm{c}\in\mathsf{C}\\ \bm{x}\in \mathsf{X}}}
	\;\bigg[
	\bm{x}^T \bm{Y}^* \bm{\Xi}_{\bm{c}} \left(L \bm{\Xi}^H_{\bm{c}} \bm{\Xi}_{\bm{c}} + \lambda_{\mathrm{STR}} \bm{I}_{Q+1}\right)^{-1} \bm{\Xi}^H_{\bm{c}} \bm{Y}^T \bm{x}^* 
	\\+\bm{1}_L^T \bm{Y}^* \bm{\Xi}_{\bm{c}} \left(L \bm{\Xi}^H_{\bm{c}} \bm{\Xi}_{\bm{c}} + \lambda_{\mathrm{SR}} \bm{I}_{Q+1}\right)^{-1} \bm{\Xi}^H_{\bm{c}} \bm{Y}^T \bm{1}_L^*
	\bigg],
\end{multline}
It is verified that setting $\lambda_{\mathrm{STR}}=\lambda_{\mathrm{SR}}=0$ reduces the objective function in~\eqref{L2reg-data-estimate} to the metric~\eqref{proof_step_1} in the proof of Proposition~\ref{prop-1}.

\subsubsection{$\ell_1$--Regularization}
If $\varphi(\bm{\gamma})=\|\bm{\gamma}\|_1$, the problems in~\eqref{chan_est_general} become LASSO problems involving $Q+1$ unknowns and a $K\times (Q+1)$ sensing matrix. Each such problem can be efficiently solved using the Fast Iterative Shrinkage-Thresholding Algorithm (FISTA)~\cite{BeckTeboulle2009FISTA,belloni2011square}, with complexity $\mathcal{O}(KQ I_{\mathrm{FISTA}})$, where $I_{\mathrm{FISTA}}$ denotes the number of proximal iterations. 
Therefore, computing the channel estimates in~\eqref{chan_est_general} and evaluating the objective function in~\eqref{reduced_obj_general} for all candidate pairs involve an overall complexity $\mathcal{O}\big(|\mathsf{X}|KL + |\mathsf{C}||\mathsf{X}|KQ I_{\mathrm{FISTA}}\big)$.


\subsection{Disjoint Decoding}\label{SEC_L2_disjoint}

By treating $\bm{\alpha}_{\bm{c},\mathrm{STR}}$ as an unknown vector and exploiting the orthogonality of the codewords in $\mathsf{X}$ to $\bm{1}_L$, together with their equal energy, the tag codeword can be decoded first using the rule derived in~\cite{Venturino-2023}, namely,
\begin{equation} \label{disjoint_tag_decoding}
	\hat{\bm{x}} = \argmax_{\bm{x} \in \mathsf{X}}  \|\bm{Y}^T \bm{x}^* \|^2.
\end{equation}
This step has computational complexity $\mathcal{O}(|\mathsf{X}|KL)$.

Next, the source codeword (and, as a byproduct, the STR and SR channel vectors) can be recovered by evaluating the metric in~\eqref{reduced_obj_general} at $\bm{x}=\hat{\bm{x}}$, where $\hat{\bm{x}}$ is obtained from~\eqref{disjoint_tag_decoding}. Neglecting terms independent of $\bm{c}$, we obtain
\begin{equation}\label{disjoint_radar_decoding_L2}
\hat{\bm{c}}=\argmin_{\bm{c}\in\mathsf{C}} \left[\tilde{F}_{\mathrm{STR}}(\bm{c},\hat{\bm{x}})+\tilde{F}_{\mathrm{SR}}(\bm{c})\right].
\end{equation}
This second step has computational complexity $\mathcal{O}(|\mathsf{C}|KQ)$ for $\ell_2$-regularization, provided that the matrices $(L \bm{\Xi}^H_{\bm{c}} \bm{\Xi}_{\bm{c}} + \lambda_i \bm{I}_{Q+1})^{-1}$ are precomputed offline for each $\bm{c}\in\mathsf{C}$ and $i\in\{\mathrm{STR},\mathrm{SR}\}$. Otherwise, the complexity is $\mathcal{O}(|\mathsf{C}|(KQ^2 + Q^3))$. For $\ell_1$-regularization, the complexity is $\mathcal{O}(|\mathsf{C}|KQ I_{\mathrm{FISTA}})$.

\begin{remark}
The rule in~\eqref{disjoint_radar_decoding_L2} exploits both the STR and SR links for source-codeword recovery. If the former indirect link is ignored, source decoding reduces to
\begin{equation} \label{only_radar_decode_l2}	
\hat{\bm{c}}=\argmin_{\bm{c}\in\mathsf{C}} \tilde{F}_{\mathrm{SR}}(\bm{c}).
\end{equation}
Importantly, disjoint decoding via~\eqref{disjoint_tag_decoding} and~\eqref{only_radar_decode_l2} remains implementable in scenarios where joint decoding is not feasible. Specifically, the rule in~\eqref{disjoint_tag_decoding} applies when the reader is interested only in the tag message and does not have access to the source codebook, as discussed in~\cite{Venturino-2023}, which makes decoding via~\eqref{L2reg-data-estimate} infeasible. Likewise, the rule in~\eqref{only_radar_decode_l2} applies when the reader aims to decode only the source message and is unaware of the tag codebook, in which case neither~\eqref{L2reg-data-estimate} nor~\eqref{disjoint_radar_decoding_L2} can be implemented.
\end{remark}

\section{Pilot-Aided Signaling} \label{SEC_Pilot_Aided}

In this signaling scheme, the source and the tag transmit codewords composed of pilot and data symbols, with information carried by the data symbols through conventional symbol-wise linear modulation. This structure enables channel estimation and data detection to be performed separately at the receiver, in contrast to the pilot-free case.

The source codeword is partitioned as $\bm{c}=[\bm{c}_{\mathrm{P}};\bm{c}_{\mathrm{D}}]$, where $\bm{c}_{\mathrm{P}}\in\mathcal{C}^{N_{\mathrm{P}}}$ and $\bm{c}_{\mathrm{D}}\in\mathcal{C}^{N_{\mathrm{D}}}$ contain pilot and data symbols, respectively, $N_{\mathrm{P}}$ and $N_{\mathrm{D}}$ denote the corresponding lengths, and $\mathcal{C}$ is the symbol alphabet. Accordingly, the response vectors in~\eqref{str_chan_model} and~\eqref{sr_chan_model} also admit the equivalent representation
\begin{equation} \label{str_sr_chan_model_2}
	\bm{\alpha}_{\bm{c},i}=\bm{\Gamma}_{i,\mathrm{P}} \bm{c}_{\mathrm{P}}+\bm{\Gamma}_{i,\mathrm{D}} \bm{c}_{\mathrm{D}},
\end{equation}
where $[\bm{\Gamma}_{i,\mathrm{P}},\bm{\Gamma}_{i,\mathrm{D}}] \in \mathbb{C}^{K \times N}$ is a convolution matrix whose first column is $[\bm{\gamma}_{i}; \bm{0}_{K-Q-1}]$ for $i\in\{\mathrm{STR},\mathrm{SR}\}$, with $\bm{\Gamma}_{i,\mathrm{P}}\in \mathbb{C}^{K \times N_{\mathrm{P}}}$ and $\bm{\Gamma}_{i, \mathrm{D}}\in \mathbb{C}^{K \times N_{\mathrm{D}}}$ containing the first $N_{\mathrm{P}}$ and the last $N_{\mathrm{D}}$ columns, respectively. Similarly, the tag codeword is partitioned as $\bm{x}=[\bm{x}_{\mathrm{P}};\bm{x}_{\mathrm{D}}]$, where $\bm{x}_{\mathrm{P}}\in\mathcal{X}^{L_{\mathrm{P}}}$ and $\bm{x}_{\mathrm{D}}\in\mathcal{X}^{L_{\mathrm{D}}}$ contain pilot and data symbols, respectively, $L_{\mathrm{P}}$ and $L_{\mathrm{D}}$ denote the corresponding lengths, and $\mathcal{X}$ is a unimodular alphabet.\footnote{Unlike  Sec.~\ref{SEC_Pilot_Free}, $\bm{x}$ is now not required to be orthogonal to $\bm{1}_{L}$.}


In the following, we first derive sufficient conditions for unique noiseless recovery of all unknown quantities, which also provide insights into pilot-symbol design. We then propose two decoding rules that offer different trade-offs between computational complexity and detection performance. In \emph{non-iterative decoding}, channel estimation is first performed using the pilot symbols, followed by data detection; in \emph{iterative decoding}, channel estimates are progressively refined using either discrete or relaxed (continuous) data-symbol updates.

\subsection{Noiseless Recovery}

The following proposition provides sufficient conditions to guarantee unique noiseless recovery of all unknown quantities. 
\begin{proposition}\label{prop-2}
	Assume  $N_{\mathrm{D}}\geq 1$, $L_{\mathrm{D}}\geq 1$, and $\bm{\gamma}_{\mathrm{STR}},\bm{\gamma}_{\mathrm{SR}}\neq \bm{0}_{Q+1}$. Unique noiseless recovery of the source and tag data symbols and of the STR and SR channel vectors is achievable if the following conditions are satisfied:
	\begin{subequations}\label{separability_2}
		\begin{align}
			\mathrm{rank}\big\{[\bm{x}_{\mathrm{P}},\bm{1}_{L_{\mathrm{P}}}]\big\}&=2, \label{separability_tag_2}\\
			\mathrm{rank}\big\{	\bm{\Xi}_{\bm{c}_{\mathrm{P}}}\big\}&=Q+1,        \label{separability_source_2}			
		\end{align}
	\end{subequations}
	where the matrix $\bm{\Xi}_{\bm{c}_{\mathrm{P}}}=[\bm{\Xi}_{\bm{c}}]_{1:N_{\mathrm{P}},:}$
	contains the first $N_{\mathrm{P}}$ rows of $\bm{\Xi}_{\bm{c}}$ and is entirely specified by the source pilot symbols.
\end{proposition}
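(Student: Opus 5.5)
The plan is constructive: I will show that the noiseless observation $\bm{Y}=\bm{x}\bm{\alpha}_{\bm{c},\mathrm{STR}}^T+\bm{1}_L\bm{\alpha}_{\bm{c},\mathrm{SR}}^T$ determines all unknowns in three successive steps. These are pilot-based channel recovery, then source data recovery, then tag data recovery. Each step inverts a linear map that is shown to be injective, so that exact recovery is unique.

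\textbf{Step 1 (pilot block).} The key observation is that the $k$-th entry of $\bm{\Xi}_{\bm{c}}\bm{\gamma}$ equals $\sum_{q}c_{k-q}\gamma_q$. For $k\le N_{\mathrm{P}}-1$ this involves only pilot symbols $c_0,\ldots,c_{N_{\mathrm{P}}-1}$. Hence the first $N_{\mathrm{P}}$ entries of $\bm{\alpha}_{\bm{c},i}$ equal $\bm{\Xi}_{\bm{c}_{\mathrm{P}}}\bm{\gamma}_i$ for $i\in\{\mathrm{STR},\mathrm{SR}\}$. Restricting $\bm{Y}$ to its first $L_{\mathrm{P}}$ rows and first $N_{\mathrm{P}}$ columns gives
\begin{equation*}
\bm{Y}_{1:L_{\mathrm{P}},1:N_{\mathrm{P}}}=[\bm{x}_{\mathrm{P}},\bm{1}_{L_{\mathrm{P}}}]\,[(\bm{\Xi}_{\bm{c}_{\mathrm{P}}}\bm{\gamma}_{\mathrm{STR}})^T;(\bm{\Xi}_{\bm{c}_{\mathrm{P}}}\bm{\gamma}_{\mathrm{SR}})^T],
\end{equation*}
which involves only known pilots and the unknown channels. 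Condition~\eqref{separability_tag_2} makes $[\bm{x}_{\mathrm{P}},\bm{1}_{L_{\mathrm{P}}}]$ left-invertible via its pseudoinverse, so the two vectors $\bm{\Xi}_{\bm{c}_{\mathrm{P}}}\bm{\gamma}_i$ are uniquely determined. Condition~\eqref{separability_source_2} states that $\bm{\Xi}_{\bm{c}_{\mathrm{P}}}$ has full column rank, so $\bm{\gamma}_i=\bm{\Xi}_{\bm{c}_{\mathrm{P}}}^\dag(\bm{\Xi}_{\bm{c}_{\mathrm{P}}}\bm{\gamma}_i)$ is unique.

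\textbf{Step 2 (source data).} I use all $K$ columns of the first $L_{\mathrm{P}}$ rows. Again by left-invertibility of $[\bm{x}_{\mathrm{P}},\bm{1}_{L_{\mathrm{P}}}]$, these rows uniquely yield the full response vectors $\bm{\alpha}_{\bm{c},\mathrm{STR}}$ and $\bm{\alpha}_{\bm{c},\mathrm{SR}}$. By~\eqref{str_sr_chan_model_2}, we have $\bm{\alpha}_{\bm{c},\mathrm{SR}}-\bm{\Gamma}_{\mathrm{SR},\mathrm{P}}\bm{c}_{\mathrm{P}}=\bm{\Gamma}_{\mathrm{SR},\mathrm{D}}\bm{c}_{\mathrm{D}}$, and the left-hand side is now known because $\bm{\gamma}_{\mathrm{SR}}$ and $\bm{c}_{\mathrm{P}}$ are known. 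It then suffices to show that $\bm{\Gamma}_{\mathrm{SR},\mathrm{D}}$ has full column rank. This is the step I expect to need the most care.

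To prove it, I will let $q_0$ be the index of the first nonzero tap of $\bm{\gamma}_{\mathrm{SR}}\neq\bm{0}_{Q+1}$. The $j$-th column of the convolution matrix then has its first nonzero entry, $\gamma_{\mathrm{SR},q_0}$, in row $j+q_0$. Here the row index $j+q_0\le N-1+Q=K-1$, so this entry always lies inside the matrix. The column leading positions are therefore strictly increasing, giving an echelon structure and full column rank. Consequently $\bm{c}_{\mathrm{D}}$ is unique. The same argument applied to the STR link could be used instead.

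\textbf{Step 3 (tag data).} The remaining rows satisfy $\bm{Y}_{L_{\mathrm{P}}+1:L,:}-\bm{1}_{L_{\mathrm{D}}}\bm{\alpha}_{\bm{c},\mathrm{SR}}^T=\bm{x}_{\mathrm{D}}\bm{\alpha}_{\bm{c},\mathrm{STR}}^T$, whose left-hand side is known. The vector $\bm{\alpha}_{\bm{c},\mathrm{STR}}$ is nonzero, because its first $N_{\mathrm{P}}$ entries equal $\bm{\Xi}_{\bm{c}_{\mathrm{P}}}\bm{\gamma}_{\mathrm{STR}}\neq\bm{0}$ by~\eqref{separability_source_2} and $\bm{\gamma}_{\mathrm{STR}}\neq\bm{0}_{Q+1}$. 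Hence $\bm{x}_{\mathrm{D}}=(\bm{Y}_{L_{\mathrm{P}}+1:L,:}-\bm{1}_{L_{\mathrm{D}}}\bm{\alpha}_{\bm{c},\mathrm{SR}}^T)\bm{\alpha}_{\bm{c},\mathrm{STR}}^*/\|\bm{\alpha}_{\bm{c},\mathrm{STR}}\|^2$ is unique.

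This completes the argument. The only nontrivial ingredients are the pilot-only dependence of the first $N_{\mathrm{P}}$ fast-time samples and the echelon argument for $\bm{\Gamma}_{\mathrm{SR},\mathrm{D}}$.
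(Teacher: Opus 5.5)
Your proposal is correct and follows the paper's proof essentially step by step. You recover the response vectors from the pilot rows via $[\bm{x}_{\mathrm{P}},\bm{1}_{L_{\mathrm{P}}}]^{\dag}$, then the channels from their first $N_{\mathrm{P}}$ entries via $\bm{\Xi}_{\bm{c}_{\mathrm{P}}}^{\dag}$, then $\bm{c}_{\mathrm{D}}$ through the data columns of the channel convolution matrix, and finally $\bm{x}_{\mathrm{D}}$ by projecting the data rows onto $\bm{\alpha}_{\bm{c},\mathrm{STR}}^*$, exactly as the paper does. The differences are minor: you decode $\bm{c}_{\mathrm{D}}$ from the SR link alone rather than averaging the STR and SR estimates, and you actually prove the full column rank of $\bm{\Gamma}_{\mathrm{SR,D}}$ (via the echelon argument) and that $\bm{\alpha}_{\bm{c},\mathrm{STR}}\neq\bm{0}_K$, both of which the paper only asserts.
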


\begin{proof}	
	Partition the matrix $\bm{Y}$ in~\eqref{reader_discrete_signal} as   $\bm{Y}=[\bm{Y}_{\mathrm{P}}; \bm{Y}_{\mathrm{D}}]$, where  $\bm{Y}_{\mathrm{P}}\in\mathbb{C}^{L_{\mathrm{P}}\times K}$ and $\bm{Y}_{\mathrm{D}}\in\mathbb{C}^{L_{\mathrm{D}}\times K}$. In the absence of noise, the matrix $\bm{Y}_{\mathrm{P}}$ can be expressed as
	\begin{equation}
		\bm{Y}_{\mathrm{P}}=[\bm{x}_{\mathrm{P}},\bm{1}_{L_{\mathrm{P}}}] [\bm{\alpha}_{\bm{c},\mathrm{STR}},\bm{\alpha}_{\bm{c},\mathrm{SR}}]^T.
	\end{equation}
	Condition~\eqref{separability_tag_2} ensures that the STR and SR response vectors can be uniquely recovered as
	\begin{equation}\label{STR_SR_vector_recovery}
		[\bm{\alpha}_{\bm{c},\mathrm{STR}},\bm{\alpha}_{\bm{c},\mathrm{SR}}]=\Big([\bm{x}_{\mathrm{P}},\bm{1}_{L_{\mathrm{P}}}]^{\dag}\bm{Y}_{\mathrm{P}}\Big)^T.
	\end{equation}
	Next, let
	\begin{equation}
		\bm{\alpha}_{\bm{c}_{\mathrm{P}}}=\big[ [\bm{\alpha}_{\bm{c};\mathrm{STR}}]_{1:N_{\mathrm{P}}}; [\bm{\alpha}_{\bm{c},\mathrm{SR}}]_{1:N_{\mathrm{P}}} \big]\in \mathbb{C}^{2N_{\mathrm{P}}}
	\end{equation}
	be the vector containing the first $N_{\mathrm{P}}$ entries of the response vectors in~\eqref{STR_SR_vector_recovery}. 
	From~\eqref{str_chan_model} and~\eqref{sr_chan_model}, we have
	\begin{equation}
		\bm{\alpha}_{\bm{c}_{\mathrm{P}}}=\big(\bm{I}_2 \otimes \bm{\Xi}_{\bm{c}_{\mathrm{P}}} \big) [\bm{\gamma}_{\mathrm{STR}};\bm{\gamma}_{\mathrm{SR}}],
	\end{equation}
	and Condition~\eqref{separability_source_2} ensures that the STR and SR channel vectors can be uniquely recovered as
	\begin{equation}\label{eq:channel_vector_recovery}
		[\bm{\gamma}_{\mathrm{STR}};\bm{\gamma}_{\mathrm{SR}}]=\big(\bm{I}_2 \otimes \bm{\Xi}_{\bm{c}_{\mathrm{P}}}^\dag \big) \bm{\alpha}_{\bm{c}_\mathrm{P}}.
	\end{equation}
	Also, the response vectors in~\eqref{STR_SR_vector_recovery} can be expanded as in~\eqref{str_sr_chan_model_2}, where $ \bm{c}_{\mathrm{D}}$ remains the only unknown term. Since the matrices $\bm{\Gamma}_{\mathrm{STR,D}}$ and $\bm{\Gamma}_{\mathrm{SR,D}}$ are full column-rank, the source data symbols can be uniquely recovered as
	\begin{multline}\label{eq:source_data_recovery}
		\bm{c}_{\mathrm{D}}=\frac{1}{2}\Big[\bm{\Gamma}_{\mathrm{STR,D}}^\dag\big(\bm{\alpha}_{\bm{c},\mathrm{STR}}-\bm{\Gamma}_{\mathrm{STR,P}} \bm{c}_{\mathrm{P}}\big)\\+\bm{\Gamma}_{\mathrm{SR,D}}^\dag\big(\bm{\alpha}_{\bm{c},\mathrm{SR}}-\bm{\Gamma}_{\mathrm{SR,P}} \bm{c}_{\mathrm{P}}\big)\Big].
	\end{multline}
	Finally, note that the matrix $\bm{Y}_{\mathrm{D}}$ can be expressed as
	\begin{equation}
		\bm{Y}_{\mathrm{D}}=\bm{x}_{\mathrm{D}}\bm{\alpha}_{\bm{c},\mathrm{STR}}^T +\bm{1}_{L_\mathrm{D}}\bm{\alpha}_{\bm{c},\mathrm{SR}}^T.
	\end{equation}
	Therefore, the tag data symbols can be uniquely recovered as
	\begin{equation}\label{eq:tag_data_recovery}
		\bm{x}_{\mathrm{D}} =	\frac{(\bm{Y}_{\mathrm{D}}-\bm{1}_{L_\mathrm{D}}\bm{\alpha}_{\bm{c},\mathrm{SR}}^T)\bm{\alpha}_{\bm{c},\mathrm{STR}}^*}{\|\bm{\alpha}_{\bm{c},\mathrm{STR}}\|^2}.
	\end{equation}
	This concludes the proof.
\end{proof}

Proposition~\ref{prop-2} shows that noiseless unique recovery critically depends on having a sufficient number of pilot symbols. 
For the tag, since $[\bm{x}_{\mathrm{P}},\bm{1}_{L_{\mathrm{P}}}] \in \mathbb{C}^{L_{\mathrm{P}}\times 2}$, Condition~\eqref{separability_tag_2} requires $L_{\mathrm{P}}\ge 2$. Recalling that $L=L_{\mathrm{P}}+L_{\mathrm{D}}$ with $L_{\mathrm{D}}\ge 1$, this implies $L\ge 3$, as in the pilot-free case. Choosing $\bm{x}_{\mathrm{P}}$ orthogonal to $\bm{1}_{L_{\mathrm{P}}}$ ensures full column rank and facilitates recovery of the STR and SR response vectors, and hence of the remaining unknowns.
For the source, since $\bm{\Xi}_{\bm{c}_{\mathrm{P}}} \in \mathbb{C}^{K_{\mathrm{P}}\times (Q+1)}$, Condition~\eqref{separability_source_2} requires $K_{\mathrm{P}}\ge Q+1$. Recalling that $N=N_{\mathrm{P}}+N_{\mathrm{D}}$ with $N_{\mathrm{D}}\ge 1$, this yields $N\ge Q+2$, as in the pilot-free case.

\subsection{Non-Iterative Decoding} 
A straightforward decoding rule follows from the constructive arguments used in the proof of Proposition~\ref{prop-2}. First, the channel vectors $\bm{\gamma}_{\mathrm{STR}}$ and $\bm{\gamma}_{\mathrm{SR}}$ are estimated from~\eqref{STR_SR_vector_recovery} and~\eqref{eq:channel_vector_recovery} using only the pilot-dependent observations. Next, estimates of the source and tag data symbols $\bm{c}_{\mathrm{D}}$ and $\bm{x}_{\mathrm{D}}$ are sequentially obtained from~\eqref{eq:source_data_recovery} and~\eqref{eq:tag_data_recovery}, and mapped onto the corresponding alphabets $\mathcal{C}$ and $\mathcal{X}$ via symbol slicing. When the matrices $[\bm{x}_{\mathrm{P}},\bm{1}_{L_{\mathrm{P}}}]^{\dagger}$ and $\bm{\Xi}_{\bm{c}_{\mathrm{P}}}^{\dagger}$ are precomputed offline and stored, the online computational complexity is dominated by the source-data recovery step and scales as 
$\mathcal{O}\!\left(K(Q{+}1) + K N_D^2 + N_D^3 + K L\right)$.

\subsection{Iterative Decoding}\label{SEC_NID}
Joint channel estimation and data symbol detection can be cast as the following regularized LS problem:
\begin{subequations}\label{semiblind-RLS}
	\begin{align}
		\min_{\substack{ \bm{c}_{\mathrm{D}}\in\mathcal{C}^{N_{\mathrm{D}}}\\ \bm{x}_{\mathrm{D}}\in \mathcal{X}^{L_{\mathrm{D}}}}} 
		\min_{\substack{\bm{\gamma}_{\mathrm{STR}} \in \mathbb{C}^{Q+1} \\ \bm{\gamma}_{\mathrm{SR}} \in \mathbb{C}^{Q+1}}}
		& F(\bm{c},\bm{x},\bm{\gamma}_{\mathrm{STR}},\bm{\gamma}_{\mathrm{SR}}),\\ 
		\mathrm{s.t.}\;&\bm{c}=[\bm{c}_{\mathrm{P}};\bm{c}_{\mathrm{D}}], \; \bm{x}=[\bm{x}_{\mathrm{P}};\bm{x}_{\mathrm{D}}],\label{semiblind-RLS-B}
	\end{align}
\end{subequations}
where the objective function remains the same as in Problem~\eqref{free_opt}, but we have added the additional constraints in~\eqref{semiblind-RLS-B} to exploit the different structure of the source and tag codewords and modified the corresponding discrete search sets accordingly.

A suboptimal solution to Problem~\eqref{semiblind-RLS} is obtained via block-coordinate descent~\cite{bertsekas1997nonlinear}. Starting from the values of $\bm{c}_{\mathrm{D}}$ and $\bm{x}_{\mathrm{D}}$ obtained via non-iterative decoding in Sec.~\ref{SEC_NID} (or from any other feasible initialization), the STR and SR channel vectors and the discrete source and tag data symbols are iteratively updated as described in Secs.~\ref{ID_channel_update} and~\ref{ID_data_update}, respectively. The proposed procedure is summarized in Alg.~\ref{alg_1}. Since the objective function is bounded from below and non-increasing across iterations, Alg.~\ref{alg_1} converges monotonically to a stationary point; however, global optimality cannot be guaranteed. 

\begin{algorithm}[t]
	\caption{Iterative decoding \label{alg_1}}
	\begin{algorithmic}[1]
		\STATE choose $\ell_2$ or $\ell_1$ regularization
		\STATE choose $\bm{c}_{D}$ and $\bm{x}_{D}$\label{Alg_1_initialization}
		\REPEAT
		\STATE update $\bm{\gamma}_{\mathrm{STR}}$ and $\bm{\gamma}_{\mathrm{SR}}$ with~\eqref{semiblind-RLS-gamma} for $\ell_2$ regularization or \eqref{semiblind-LASSO} for $\ell_1$ regularization
		\STATE update 	$\bm{c}_{D}$ with~\eqref{semiblind-RLS-c} \label{Alg_1_update_c}
		\STATE update 	$\bm{x}_{D}$ with~\eqref{semiblind-RLS-x} \label{Alg_1_update_x}
		\UNTIL convergence
		\RETURN $\bm{c}_{D}$, $\bm{x}_{D}$, $\bm{\gamma}_{\mathrm{STR}}$, $\bm{\gamma}_{\mathrm{SR}}$
	\end{algorithmic}
\end{algorithm}

\subsubsection{Channel Updates}\label{ID_channel_update}
Consider the minimization over $\bm{\gamma}_{\mathrm{STR}}$ and $\bm{\gamma}_{\mathrm{SR}}$ in Problem~\eqref{semiblind-RLS}.
From~\eqref{reader_discrete_signal}, the vectorized signal $\bm{y}=\mathrm{vec}\{\bm{Y}^T\}\in \mathbb{C}^{KL}$ is written as
\begin{align} \label{vec_Y_expression}
	\bm{y} &= (\bm{x} \otimes \bm{\Xi}_{\bm{c}}) \bm{\gamma}_{\mathrm{STR}} + (\bm{1}_L \otimes \bm{\Xi}_{\bm{c}})\bm{\gamma}_{\mathrm{SR}} +\bm{\omega}
	\notag\\
	&=\bm{\Xi}_{\bm{x},\bm{c}} \bm{\gamma}+\bm{\omega},
\end{align}
where $\bm{\omega}=\mathrm{vec}\{\bm{\Omega}^T\}\in \mathbb{C}^{KL}$, $\bm{\Xi}_{\bm{x},\bm{c}}=[\bm{x} \otimes \bm{\Xi}_{\bm{c}},\bm{1}_L \otimes \bm{\Xi}_{\bm{c}}]\in \mathbb{C}^{KL\times 2(Q+1)}$, and $\bm{\gamma}=[\bm{\gamma}_{\mathrm{STR}};\bm{\gamma}_{\mathrm{SR}}]\in \mathbb{C}^{2(Q+1)}$. 
\begin{itemize}
\item For $\ell_2$-regularization, the resulting quadratic problem is
\begin{equation} \label{equi_joint_ridge_formu}
	\min_{\bm{\gamma}\in \mathbb{C}^{2(Q+1)}} \; \left\|\bm{y} - \bm{\Xi}_{\bm{x},\bm{c}} \bm{\gamma} \right\|^2 + \bm{\gamma}^H \bm{\Lambda} \bm{\gamma},
\end{equation}
where $\bm{\Lambda} =\mathrm{diag}\{\lambda_{\mathrm{STR}} \bm{I}_{Q+1},\lambda_{\mathrm{SR}} \bm{I}_{Q+1}\}$. The solution to~\eqref{equi_joint_ridge_formu} is
\begin{equation} \label{semiblind-RLS-gamma}
	[\hat{\bm{\gamma}}_{\mathrm{STR}};\hat{\bm{\gamma}}_{\mathrm{SR}}]= \left(\bm{\Xi}_{\bm{x},\bm{c}}^H \bm{\Xi}_{\bm{x},\bm{c}} + \bm{\Lambda}\right)^{-1} \bm{\Xi}_{\bm{x},\bm{c}}^H \bm{y},
\end{equation}
whose computational complexity is $\mathcal{O}\big(KLQ^2 + Q^3\big)$. 
If the matrix $\left(\bm{\Xi}_{\bm{x},\bm{c}}^H \bm{\Xi}_{\bm{x},\bm{c}} + \bm{\Lambda}\right)^{-1}\in \mathbb{C}^{2(Q+1)\times 2(Q+1)}$ can be precomputed and stored for all candidate codeword pairs $(\bm{x},\bm{c})$, the online complexity reduces to $\mathcal{O}\big(KLQ + Q^2\big)$. 
If the larger matrix $\left(\bm{\Xi}_{\bm{x},\bm{c}}^H \bm{\Xi}_{\bm{x},\bm{c}} + \bm{\Lambda}\right)^{-1}\bm{\Xi}_{\bm{x},\bm{c}}^H\in \mathbb{C}^{2(Q+1)\times KL}$ can be precomputed and stored for all pairs $(\bm{x},\bm{c})$, the online complexity further reduces to $\mathcal{O}(KLQ)$.

\item For $\ell_1$-regularization, the resulting LASSO problem is
\begin{equation}
	[\hat{\bm{\gamma}}_{\mathrm{STR}};\hat{\bm{\gamma}}_{\mathrm{SR}}]=\argmin_{\bm{\gamma}\in \mathbb{C}^{2(Q+1)}}
	\left\|\bm{y} - \bm{\Xi}_{\bm{x},\bm{c}} \bm{\gamma} \right\|^2 + \|\bm{\Lambda} \bm{\gamma}\|_{1}. 
	\label{semiblind-LASSO}
\end{equation}
Problem~\eqref{semiblind-LASSO} has $2(Q+1)$ unknowns and a $KL\times 2(Q+1)$ sensing matrix, and using a FISTA-type solver involves a complexity $\mathcal{O}(KLQI_{\mathrm{FISTA}})$.
\end{itemize}

\subsubsection{Discrete Data-Symbol Updates}\label{ID_data_update}
Consider first the minimization over $\bm{c}_{\mathrm{D}}$ in Problem~\eqref{semiblind-RLS}. Recall that the response vectors can be expanded as in~\eqref{str_sr_chan_model_2} and define the matrix
\begin{equation}\tilde{\bm{Y}}=\bm{Y}- \bm{x} (\bm{\Gamma}_{\mathrm{STR,P}} \bm{c}_{\mathrm{P}})^T-\bm{1}_{L} (\bm{\Gamma}_{\mathrm{SR,P}} \bm{c}_{\mathrm{P}})^T.\end{equation}
Then, the problem to be solved is
\begin{equation}\label{semiblind-RLS-c}
	\hat{\bm{c}}_{\mathrm{D}}=\argmin_{\bm{c}_{\mathrm{D}}\in\mathcal{C}^{N_{\mathrm{D}}}} 
	\left\| \tilde{\bm{Y}} - \bm{x} (\bm{\Gamma}_{\mathrm{STR,D}} \bm{c}_{\mathrm{D}})^T 
	- \bm{1}_{L} (\bm{\Gamma}_{\mathrm{SR,D}} \bm{c}_{\mathrm{D}})^T \right\|_F^2,
\end{equation}
which has a computational cost $\mathcal{O}\big(|\mathcal{C}|^{N_\mathrm{D}} K(L+N_\mathrm{D})\big)$.

Consider next the minimization over $\bm{x}_{\mathrm{D}}$ in Problem~\eqref{semiblind-RLS}. Define the matrix
\begin{equation}  \label{def_tildeY_STR}
	\tilde{\bm{Y}}_{\mathrm{STR}}=\bm{Y}- \bm{1}_{L} (\bm{\Xi}_{\bm{c}} \bm{\gamma}_{\mathrm{SR}})^T.
\end{equation} 
This matrix can be partitioned as $\tilde{\bm{Y}}_{\mathrm{STR}}=[\tilde{\bm{Y}}_{\mathrm{STR,P}}; \tilde{\bm{Y}}_{\mathrm{STR,D}}]$, where  $\tilde{\bm{Y}}_{\mathrm{STR,P}}\in\mathbb{C}^{L_{\mathrm{P}}\times K}$ and $\tilde{\bm{Y}}_{\mathrm{STR,D}}\in\mathbb{C}^{L_{\mathrm{D}}\times K}$. Hence, the problem to be solved is
\begin{equation}\label{semiblind-RLS-x}
	\hat{\bm{x}}_{\mathrm{D}}=\argmin_{\bm{x}_{\mathrm{D}}\in \mathcal{X}^{L_{\mathrm{D}}}}
	\left\| \tilde{\bm{Y}}_{\mathrm{STR,D}} - \bm{x}_{\mathrm{D}} (\bm{\Xi}_{\bm{c}} \bm{\gamma}_{\mathrm{STR}})^T \right\|_F^2,
\end{equation}  
which has computational cost $\mathcal{O}\big(KQ + |\mathcal{X}|^{L_{\mathrm{D}}}(K L_{\mathrm{D}})\big)$.

The discrete data-symbol updates in~\eqref{semiblind-RLS-c} and~\eqref{semiblind-RLS-x} may become computationally demanding when $N_{\mathrm{D}}$ and $L_{\mathrm{D}}$ are large, respectively. To reduce the complexity of iterative decoding, relaxed data-symbol updates are proposed in Sec.~\ref {ID_relaxed_data_update}.

\subsubsection{Relaxed Data-Symbol Updates}\label{ID_relaxed_data_update}
Problem~\eqref{semiblind-RLS} is reformulated as follows:
\begin{subequations}\label{semiblind-RLS-mod}
	\begin{align}
		\min_{\substack{ \bm{c}_{\mathrm{D}}\in\mathbb{C}^{N_{\mathrm{D}}}\\ \bm{x}_{\mathrm{D}}\in \mathbb{C}^{L_{\mathrm{D}}}}} 
		\min_{\substack{\bm{\gamma}_{\mathrm{STR}} \in \mathbb{C}^{Q+1} \\ \bm{\gamma}_{\mathrm{SR}} \in \mathbb{C}^{Q+1}}}
		&  F(\bm{c},\bm{x},\bm{\gamma}_{\mathrm{STR}},\bm{\gamma}_{\mathrm{SR}}),\\ 
		\mathrm{s.t.}\;&\bm{c}=[\bm{c}_{\mathrm{P}};\bm{c}_{\mathrm{D}}], \; \bm{x}=[\bm{x}_{\mathrm{P}};\bm{x}_{\mathrm{D}}].
	\end{align}
\end{subequations}
where the objective function is defined as
\begin{multline}
	F(\bm{c},\bm{x},\bm{\gamma}_{\mathrm{STR}},\bm{\gamma}_{\mathrm{SR}})
	= \left\|\bm{Y} - \bm{x}(\bm{\Xi}_{\bm{c}}\bm{\gamma}_{\mathrm{STR}})^T 
	- \bm{1}_L(\bm{\Xi}_{\bm{c}}\bm{\gamma}_{\mathrm{SR}})^T \right\|_F^2\\
	+ \lambda_{\mathrm{STR}} \, \phi(\bm{\gamma}_{\mathrm{STR}})
	+ \lambda_{\mathrm{SR}} \, \phi(\bm{\gamma}_{\mathrm{SR}})
	+ \lambda_{\mathrm{c}} \|\bm{c}_{\mathrm{D}}\|^2
	+ \lambda_{\mathrm{x}} \|\bm{x}_{\mathrm{D}}\|^2.
	\label{free_obj_relaxed}
\end{multline}
Here, we relax the search space for both the source and tag data symbols and introduce the quadratic penalty terms $\lambda_{\mathrm{c}} \|\bm{c}_{\mathrm{D}}\|^2$ and $\lambda_{\mathrm{x}} \|\bm{x}_{\mathrm{D}}\|^2$ in the objective function, where $\lambda_{\mathrm{c}}$ and $\lambda_{\mathrm{x}}$ are non-negative regularization parameters.
A suboptimal solution to Problem~\eqref{semiblind-RLS-mod} can again be obtained via block-coordinate descent. The channel updates are still performed as described in Sec.~\ref{ID_channel_update}, while the relaxed data symbol updates are now updated as follows.

For the source, define $\bm{\Gamma}_{\mathrm{D},p}=x_{p}\bm{\Gamma}_{\mathrm{STR,D}}+\bm{\Gamma}_{\mathrm{SR,D}}$ for $p=1,\ldots,L$.  Then, the minimization over $\bm{c}_{\mathrm{D}}$ in Problem~\eqref{semiblind-RLS-mod} can be recast as
	\begin{equation}
		\min_{\bm{c}_{\mathrm{D}}\in\mathbb{C}^{N_{\mathrm{D}}}} 
		\sum_{p=1}^{L}\left\| \big(\tilde{\bm{Y}}_{p,:}\big)^T - \bm{\Gamma}_{\mathrm{D},p}\bm{c}_{\mathrm{D}} \right\|^2+\lambda_{\mathrm{c}} \|\bm{c}_{\mathrm{D}}\|^2,
	\end{equation}
whose solution is
	\begin{equation}\label{semiblind-RLS-c-mod}
		\hat{\bm{c}}_{\mathrm{D}}=\left(\sum_{p=1}^{L}\bm{\Gamma}_{\mathrm{D},p}^H\bm{\Gamma}_{\mathrm{D},p}+ \lambda_{\mathrm{c}} \bm{I}_{N_{\mathrm{D}}}\right)^{-1}\left(\sum_{p=1}^{L}\bm{\Gamma}_{\mathrm{D},p}^H \big(\tilde{\bm{Y}}_{p,:}\big)^T\right),
	\end{equation}
with computational cost $\mathcal{O}\big(K L N^2_\mathrm{D}+N^3_\mathrm{D}\big)$.
	
For the tag, the minimization over $\bm{x}_{\mathrm{D}}$ in Problem~\eqref{semiblind-RLS-mod} can be recast as
	\begin{equation}
		\min_{\bm{x}_{\mathrm{D}}\in \mathbb{C}^{L_{\mathrm{D}}}}
		\left\| \tilde{\bm{Y}}_{\mathrm{STR,D}} - \bm{x}_{\mathrm{D}} (\bm{\Xi}_{\bm{c}} \bm{\gamma}_{\mathrm{STR}})^T \right\|_F^2+ \lambda_{\mathrm{x}} \|\bm{x}_{\mathrm{D}}\|^2,
	\end{equation}
whose solution is
	\begin{equation}\label{semiblind-RLS-x-mod}
		\hat{\bm{x}}_{\mathrm{D}}=\frac{\tilde{\bm{Y}}_{\mathrm{STR,D}} (\bm{\Xi}_{\bm{c}} \bm{\gamma}_{\mathrm{STR}})^*}{\lambda_{\mathrm{x}} +\|\bm{\Xi}_{\bm{c}} \bm{\gamma}_{\mathrm{STR}}\|^2},
	\end{equation}
with computational cost  $\mathcal{O}\big(K(Q+L_\mathrm{D})\big)$.

The iterative procedure, consisting of channel updates via~\eqref{semiblind-RLS-gamma} or~\eqref{semiblind-LASSO} (for $\ell_2$- or $\ell_1$-regularization) and relaxed data-symbol updates via~\eqref{semiblind-RLS-c-mod} and~\eqref{semiblind-RLS-x-mod}, monotonically decreases the objective function of Problem~\eqref{semiblind-RLS-mod}. At convergence, the continuous vectors $\bm{c}_{\mathrm{D}}$ and $\bm{x}_{\mathrm{D}}$  are mapped onto the corresponding alphabets $\mathcal{C}$ and $\mathcal{X}$ via symbol slicing.

\section{Performance Analysis}\label{SEC_Performance_analysis}
In this section, we validate the proposed schemes and illustrate the resulting system tradeoffs. Performance is assessed in terms of the BER of the source and tag messages and the NRMSE of the STR and SR channel estimates, defined as
\begin{equation} 
	\left(\frac{\mathbb{E}\!\left[\|\hat{\bm{\gamma}}_i- \bm{\gamma}_i\|^2\right]}{\mathbb{E}\!\left[\|\bm{\gamma}_i\|^2\right]}\right)^{\frac{1}{2}}, \quad i \in \{\mathrm{STR},\mathrm{SR}\}. 
\end{equation}
The BER of the source and tag messages assuming perfect CSI, as well as the BER of the tag message obtained with the baseline pilot-free and pilot-aided schemes developed in~\cite{Venturino-2023} and~\cite{Venturino-2024} are included for comparison.
Note that in~\cite{Venturino-2023} and~\cite{Venturino-2024}, the source codebook is unknown, and only the response vectors $\bm{\alpha}_{\mathrm{STR}}$ and $\bm{\alpha}_{\mathrm{SR}}$ can be estimated.
In contrast, the knowledge of the source codebook enables estimation of the corresponding channel vectors $\bm{\gamma}_{\mathrm{STR}}$ and $\bm{\gamma}_{\mathrm{SR}}$. All curves are obtained by averaging over $10^7$ independent Monte Carlo realizations.

\begin{figure*}[tp!]
	\centering
	\includegraphics[width=0.82\linewidth]{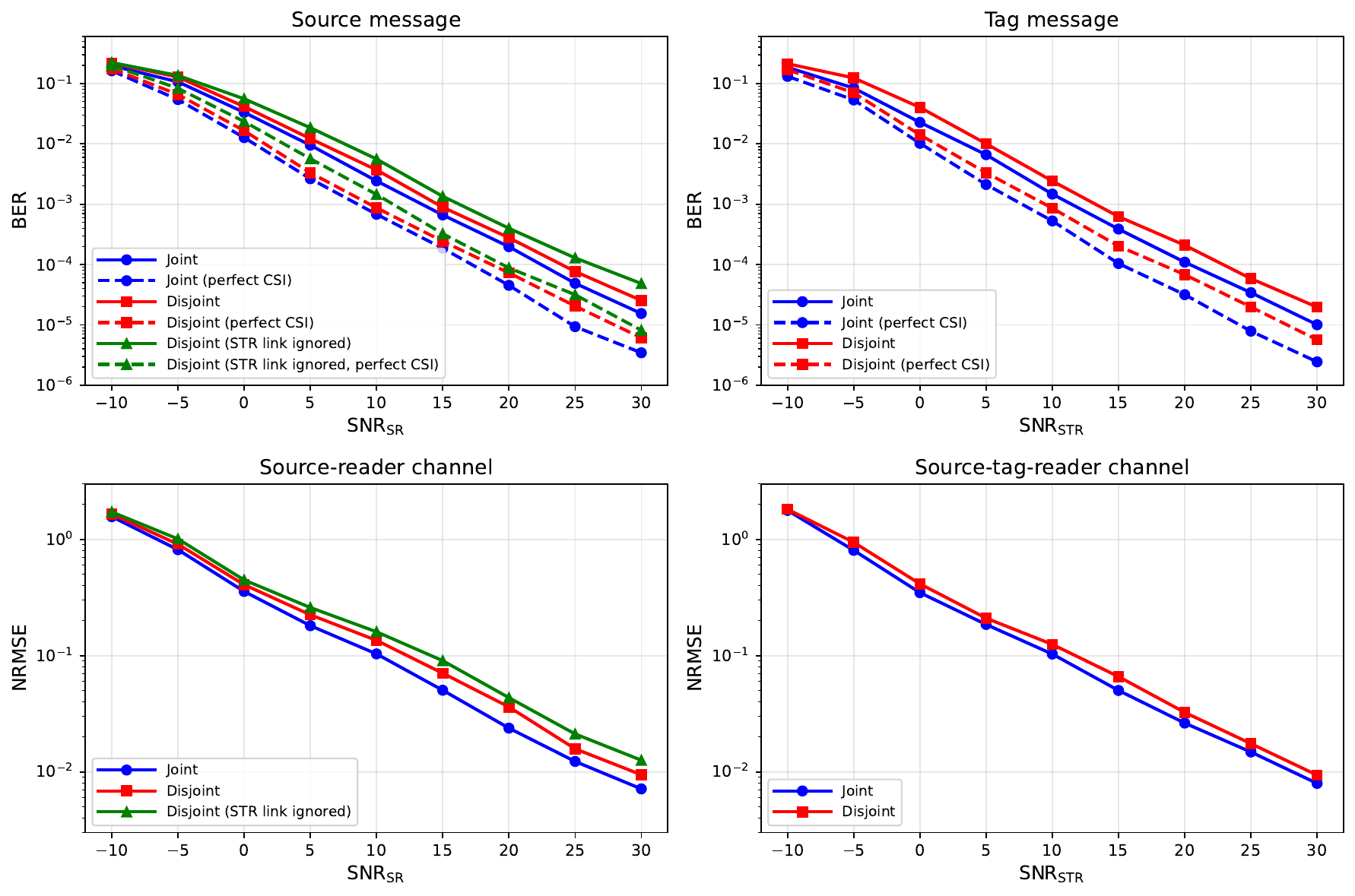}
	\vspace{-0.3cm}
	\caption{Source BER (top left) and SR channel NRMSE (bottom left) versus $\mathrm{SNR}_{\mathrm{SR}}$, and tag BER (top right) and STR channel NRMSE (bottom right) versus $\mathrm{SNR}_{\mathrm{STR}}$, for pilot-free signaling with joint and disjoint decoding. For tag BER, disjoint decoding coincides with that in~\cite{Venturino-2023}.}
	\vspace{-0.15cm}
	\label{joint_comp}
\end{figure*}
\begin{figure}[t!]
	\centering
	\includegraphics[width=0.82\linewidth]{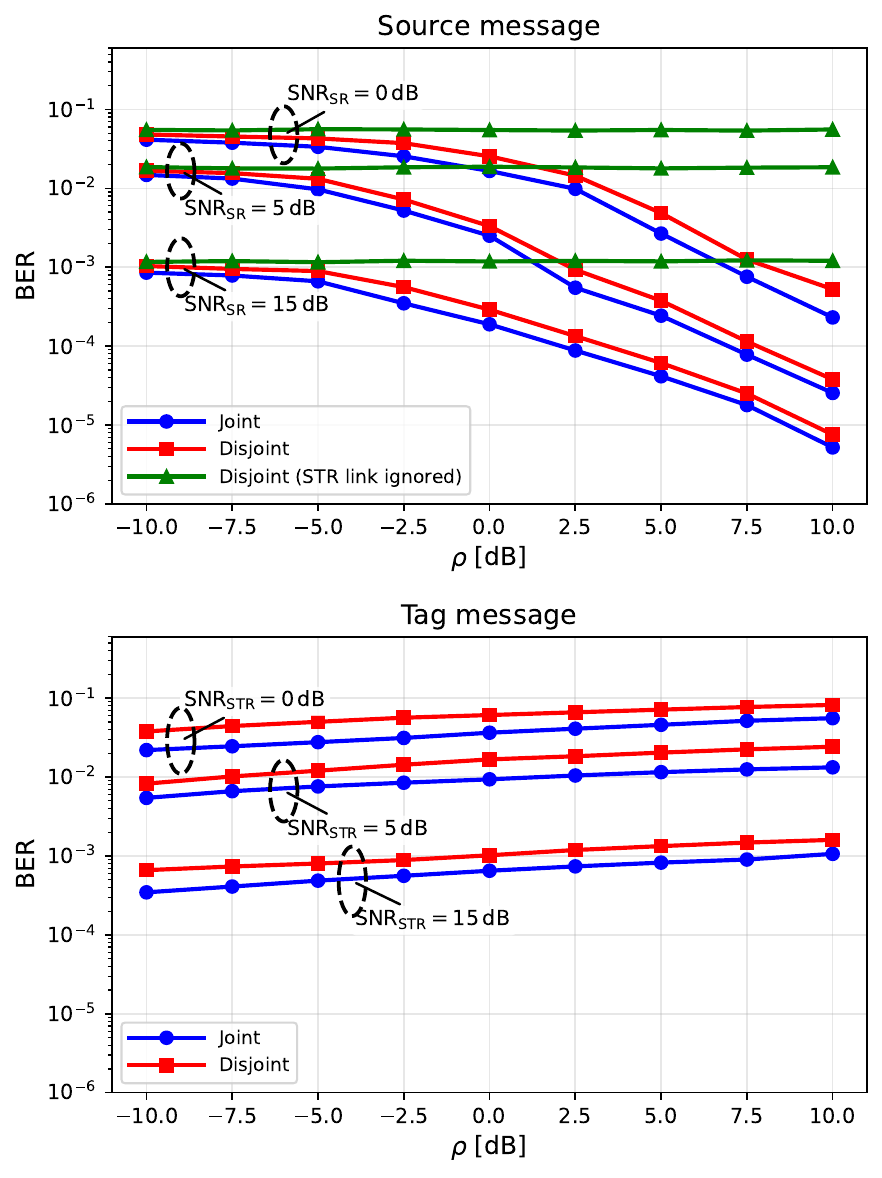}
	\vspace{-0.3cm}\caption{Source BER versus $\rho$ for fixed $\mathrm{SNR}_{\mathrm{SR}}$ (top) and tag BER versus $\rho$ for fixed $\mathrm{SNR}_{\mathrm{STR}}$ (bottom), for pilot-free signaling with joint and disjoint decoding. For tag BER, disjoint decoding coincides with that in~\cite{Venturino-2023}.}
	\vspace{-0.15cm}
	\label{diff_rho_comp}
\end{figure}
\begin{figure}[t!]
	\centering
	\includegraphics[width=0.82\linewidth]{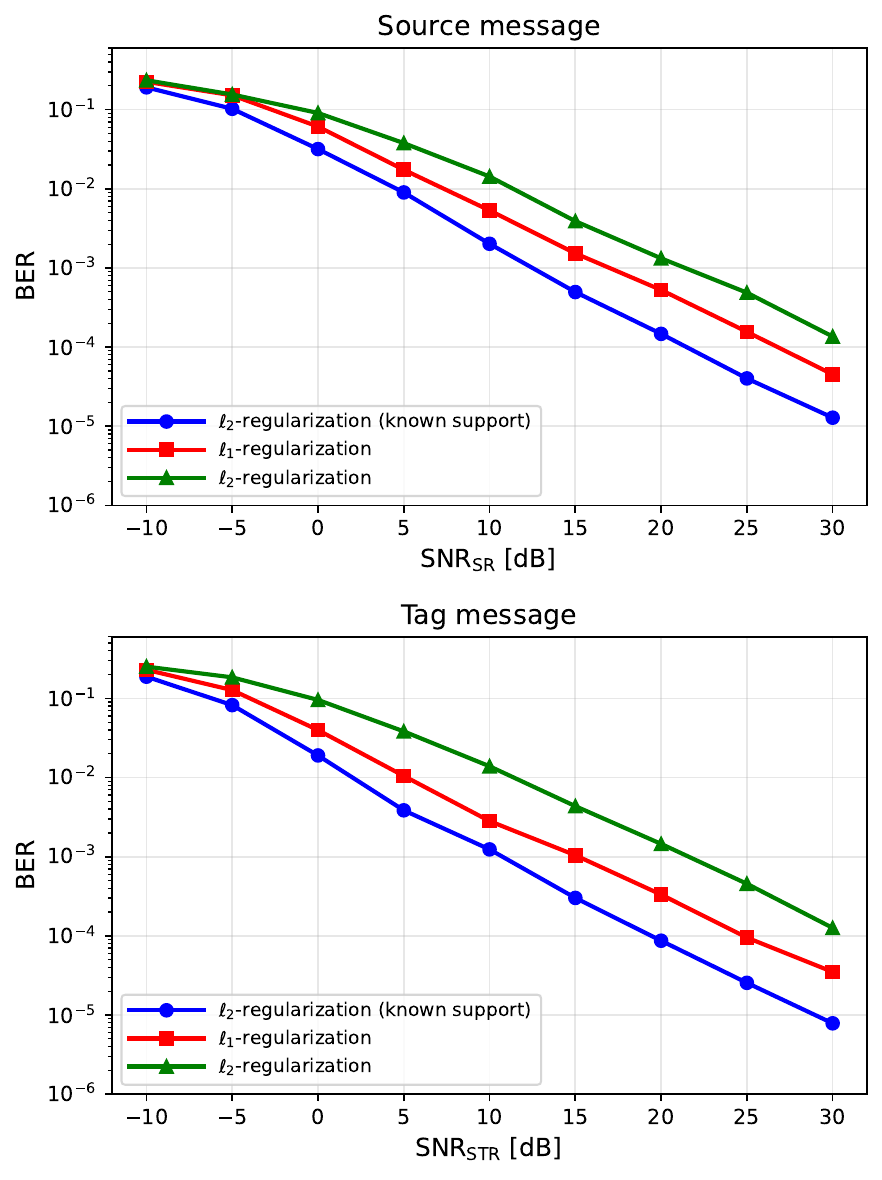}
	\vspace{-0.3cm}\caption{Source BER versus $\mathrm{SNR}_{\mathrm{SR}}$ (top) and tag BER versus $\mathrm{SNR}_{\mathrm{STR}}$ (bottom) for pilot-free signaling with joint decoding, comparing $\ell_1$ and $\ell_2$ regularization under sparse channel conditions.}
	\vspace{-0.15cm}
	\label{sparse_comp}
\end{figure}

\begin{figure}[t!]
	\centering
	\includegraphics[width=0.82\linewidth]{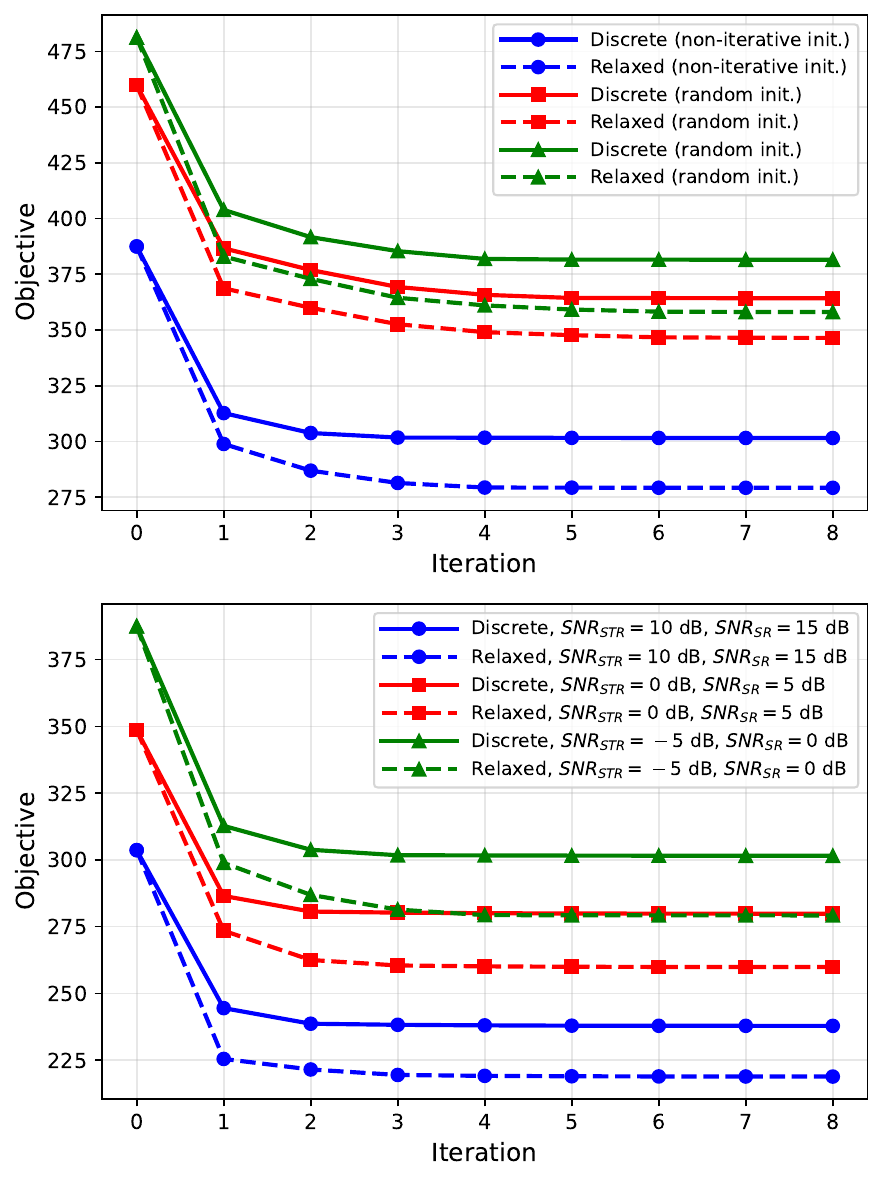}
	\vspace{-0.3cm}\caption{Objective function versus iteration number for iterative decoding under pilot-aided signaling. Top subfigure: three different initializations when $\mathrm{SNR}_{\mathrm{STR}}=-5$~dB and $\mathrm{SNR}_{\mathrm{SR}}=0$~dB. Bottom subfigure:  three different SNR regimes using initialization based on non-iterative decoding.}
	\vspace{-0.15cm}
	\label{converge_asce}
\end{figure}

\begin{figure*}[t!]
	\centering
	\includegraphics[width=0.82\linewidth]{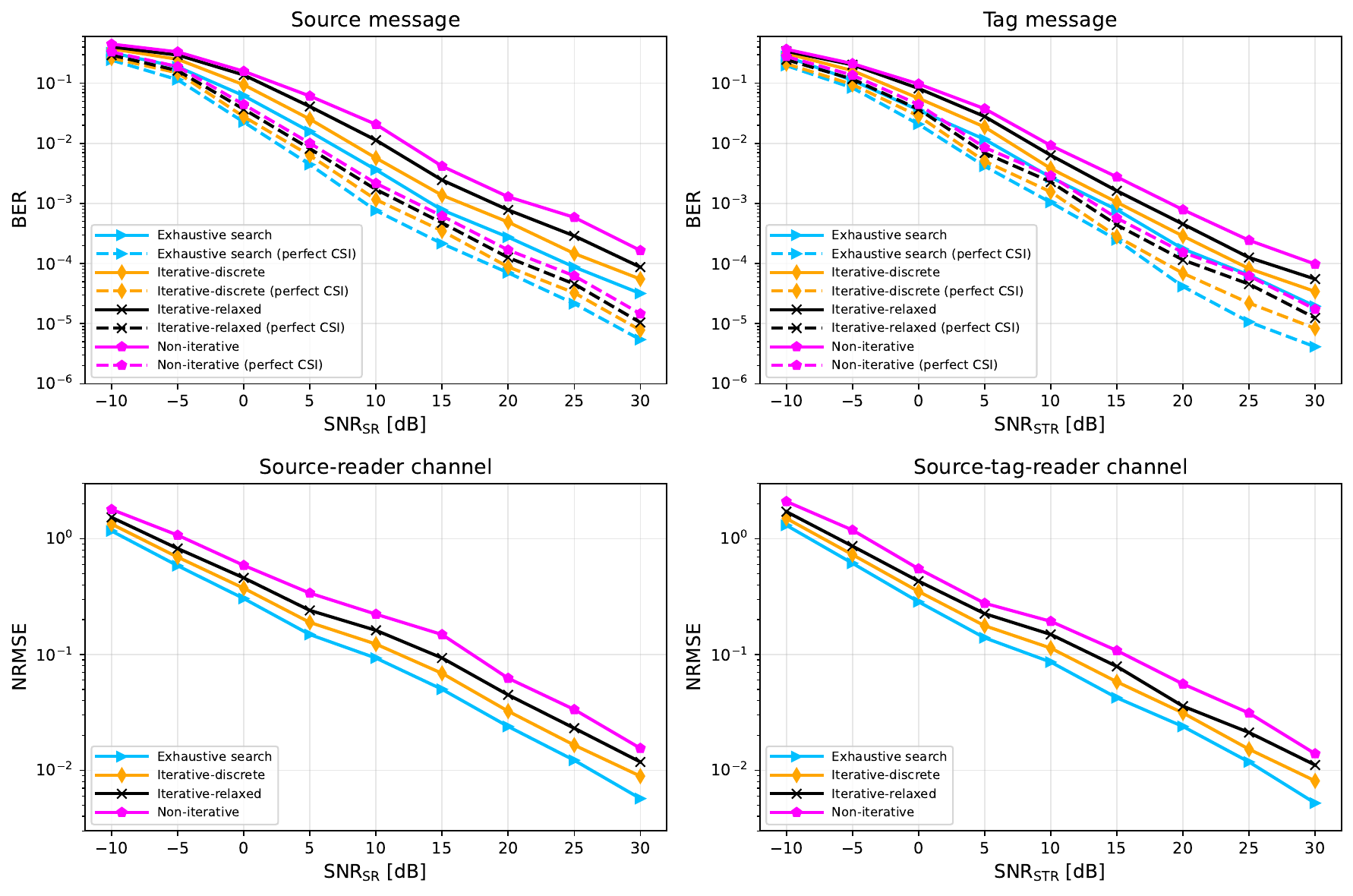}
	\vspace{-0.3cm}\caption{Source BER (top left) and SR channel NRMSE (bottom left) versus $\mathrm{SNR}_{\mathrm{SR}}$, and tag BER (top right) and STR channel NRMSE (bottom right) versus $\mathrm{SNR}_{\mathrm{STR}}$ under pilot-aided signaling, comparing the solution to Problem~\eqref{semiblind-RLS} via exhaustive search and iterative and non-iterative decoding.}
	\vspace{-0.15cm}
	\label{semi_comp}
\end{figure*}

\begin{figure}[t!]
	\centering
	\includegraphics[width=0.82\linewidth]{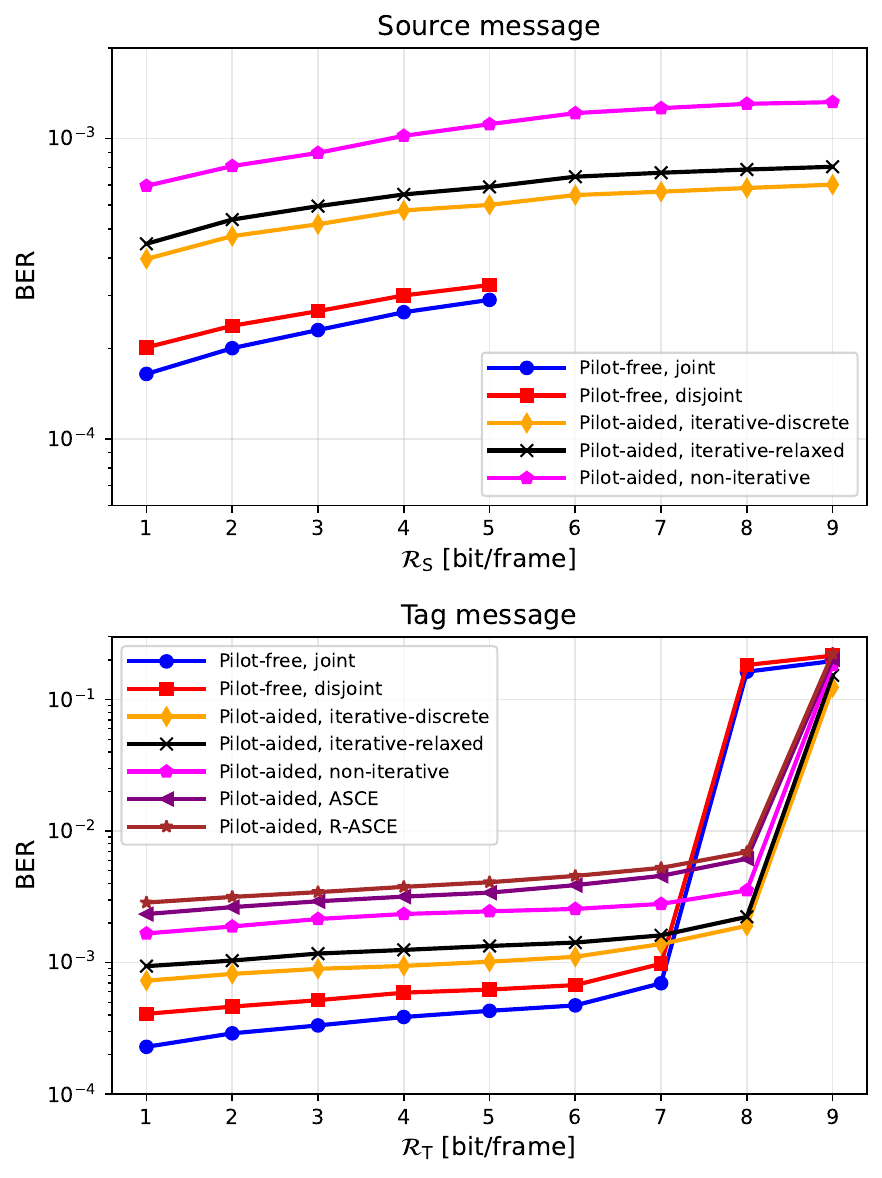}
	\vspace{-0.3cm}
	\caption{		
	Source BER versus $\mathcal{R}_{\mathrm{S}}$ with $\mathcal{R}_{\mathrm{T}}=4$ bit/frame (top) and tag BER versus $\mathcal{R}_{\mathrm{T}}$ with $\mathcal{R}_{\mathrm{S}}=4$ bit/frame (bottom), when $\mathrm{SNR}_{\mathrm{STR}}=15$~dB and $\mathrm{SNR}_{\mathrm{SR}}=20$~dB. 		
	For pilot-free signaling, joint and disjoint decoding are considered; for the tag, disjoint decoding coincides with that in~\cite{Venturino-2023}. 		
	For pilot-aided signaling, non-iterative and iterative decoding are considered; for the tag, the iterative decoders in~\cite{Venturino-2024} (ASCE and R-ASCE) are also included.}
	\vspace{-0.15cm}
	\label{blind_semi_comp}
\end{figure}

\subsection{Simulation Setup} \label{simu_setup}

The ambient source operates at a carrier frequency $f = 24$~GHz, 
with bandwidth $W = 50$~MHz and $N_{\mathrm{PRI}} = 150$ unambiguous delay bins, which yields a radar delay resolution of $\Delta = 20$~ns and a PRI of $T_{\mathrm{PRI}} = 3~\mu$s.
The source and tag codebooks are constructed over a binary alphabet $\{+1,-1\}$, 
with codeword lengths $N = 31$ and $L = 10$, respectively, and equiprobable codewords.  Unless otherwise stated, $|{\cal C}| = |{\cal X}| = 16$, which yields $\mathcal{R}_{\mathrm{S}} = \mathcal{R}_{\mathrm{T}} = 4$~bit/frame from~\eqref{source_rate} and~\eqref{tag_rate}.

Under these parameters, condition~\eqref{a2_satisfy_condi} requires $\nu_{\max} \ll 33.3$~kHz, corresponding to a bistatic radial velocity $\nu_{\max} c / f \ll 416$~m/s, where $c$ denotes the speed of light. In many practical scenarios (e.g., vehicular and indoor sensing), scatterer velocities are well below this threshold.

The channel vectors $\bm{\gamma}_{\mathrm{STR}}$ and $\bm{\gamma}_{\mathrm{SR}}$ contain three non-zero taps, independently generated and modeled as the sum of a specular and a diffuse component~\cite{Shnidman-1999}, i.e.,
\begin{align}
	\sigma_{i}  
	\bigg[\sqrt{\frac{\kappa}{1+\kappa}} \e^{\imag \phi} 
	+ \sqrt{\frac{1}{1+\kappa}} g \bigg],  
	\quad i \in \{\mathrm{STR},\mathrm{SR}\},
\end{align}
where $\sigma_i^2$ denotes the average power, 
$\kappa$ (set to $-10$~dB) is the power ratio between the specular and diffuse components, 
$\phi$ is uniformly distributed over $[0,2\pi)$, 
and $g$ is a circularly symmetric complex Gaussian random variable with unit variance. 
Unless otherwise stated, we consider dense channels and set $Q+1 = 3$. Under sparse channels, we instead set $Q+1 = 15$, with the non-zero taps located at random delay positions.

The noise matrix $\bm{\Omega}$ has i.i.d. circularly symmetric complex Gaussian entries with variance $\sigma_\omega^2$. For future reference, we define the SNR of the signal components received via the STR and SR links as $\mathrm{SNR}_{i}=\mathbb{E}\big[\|\bm{c}\|^{2}\big]\sigma_{i}^{2}/\sigma_{\omega}^{2}$ for $i \in \{\mathrm{STR},\mathrm{SR}\}$, 
and their ratio is denoted by $\rho \triangleq \mathrm{SNR}_{\mathrm{STR}} / \mathrm{SNR}_{\mathrm{SR}}$. Unless otherwise stated, we set  $\rho = -5$~dB.

Under $\ell_2$-regularization, the parameters $\lambda_{\mathrm{STR}}$ and $\lambda_{\mathrm{SR}}$ are set to $0.1$, whereas under $\ell_1$-regularization they are selected according to~\cite[Eq.~(6)]{belloni2011square}. 
In pilot-aided signaling, iterative decoding is initialized using the solution obtained via non-iterative decoding, unless otherwise stated, and it terminates when $|F_{t+1} - F_t| < 10^{-8} |F_t|$, where $F_t$ denotes the objective value at iteration $t$. Finally, when relaxed data-symbol updates are employed, $\lambda_{\mathrm{c}} = \lambda_{\mathrm{x}} = \sigma_\omega^2$.

\subsection{Pilot-Free Signaling}\label{SEC_Performance_analysis_pilot_free}
We first consider the signaling scheme in Sec.~\ref{SEC_Pilot_Free}. 
The source codebook is constructed by randomly selecting $|{\cal C}|$ sequences from the set of $33$ Gold sequences of length $N=31$. 
Gold sequences are widely used in radar and communication applications due to their favorable autocorrelation properties, characterized by low sidelobes and relatively uniform spectral energy~\cite{Gold1967,Golomb2005,Proakis-book}. 
We verified by direct computation that these sequences satisfy the condition in~\eqref{separability_source}.
For the tag, we adopt the construction in~\cite[Example~2]{Venturino-2023} to generate the set of $126$ binary vectors satisfying~\eqref{orhogonality_condition} and~\eqref{separability_tag}.
The tag codebook is obtained by randomly selecting $|{\cal X}|$ vectors from this set.

Fig.~\ref{joint_comp} compares joint and disjoint decoding. Joint decoding consistently achieves lower BER and channel-estimation NRMSE than the disjoint alternatives, as it fully exploits the coupled structure of the STR and SR links. In contrast, disjoint decoding reduces computational complexity by separately recovering the tag and source messages (and may further ignore the STR contribution in source decoding), at the expense of some performance degradation. For tag message recovery, the disjoint decoder in~\eqref{disjoint_tag_decoding} coincides with that in~\cite{Venturino-2023} and therefore represents the baseline scheme. The proposed joint decoder extends that framework by leveraging the source codebook structure and the coupling between the subspace structure of the STR and SR channels. Across all schemes, the BER gap with respect to the benchmark with perfect CSI remains limited over the considered SNR range.

Fig.~\ref{diff_rho_comp} examines the impact of power imbalance between the STR and SR links. 
When $\mathrm{SNR}_{\mathrm{SR}}$ is fixed, increasing $\rho$ significantly reduces the source BER for both joint and disjoint decoding (except when the STR link is ignored). 
This occurs because the STR link carries a replica of the source message, thereby providing an additional diversity branch. 
Conversely, when $\mathrm{SNR}_{\mathrm{STR}}$ is fixed, decreasing $\rho$ leads to only a modest reduction in tag BER. 
This is because strengthening the SR link affects tag detection only indirectly, by improving the estimation of $\bm{\Xi}_{\bm{c}}$, which determines the subspace structure of the STR response vector.

Finally, Fig.~\ref{sparse_comp} analyzes joint decoding with $\ell_2$- and  $\ell_1$-regularization under sparse channel conditions. 
For comparison, $\ell_2$-regularization with perfect knowledge of the channel support is also included. The results show 
that promoting sparsity in channel recovery via  $\ell_1$-regularization yields a marked performance gain over $\ell_2$-regularization when the channel support is unknown.

\subsection{Pilot-Aided Signaling} \label{SEC_Performance_analysis_pilot_aided}
We now consider the signaling scheme in Sec.~\ref{SEC_Pilot_Aided}.
For the source, one Gold sequence is randomly selected at each Monte Carlo realization. The first $N_\mathrm{P}$ symbols of this sequence are used as pilot symbols, while the remaining $N_\mathrm{D}$ symbols of the transmitted codeword carry the data. For the tag codeword, the pilot component is chosen as the binary sequence with alternating symbols $+1,-1,+1,-1,\ldots$ of length $L_\mathrm{P}$, while the remaining $L_\mathrm{D}$ symbols of the transmitted codeword carry the tag data. 
From~\eqref{source_rate} and~\eqref{tag_rate}, we have $\mathcal{R}_{\mathrm{S}} = N_{\mathrm{D}}$~bit/frame and $\mathcal{R}_{\mathrm{T}} = L_{\mathrm{D}}$~bit/frame, respectively.

We first analyze the convergence behavior of iterative decoding with discrete and relaxed data-symbol updates. Fig.~\ref{converge_asce} plots the values of the corresponding objective functions in \eqref{semiblind-RLS} and \eqref{semiblind-RLS-mod} versus the iteration number. The top subfigure fixes $\mathrm{SNR}_{\mathrm{STR}}=-5$~dB and $\mathrm{SNR}_{\mathrm{SR}}=0$~dB and compares three initializations for the pair $(\bm{c}_{\mathrm{D}},\bm{x}_{\mathrm{D}})$: the solution provided by non-iterative decoding and two random choices. The results show that initialization via non-iterative decoding yields a lower initial objective value and accelerates convergence compared with random initializations. In the bottom subfigure, non-iterative initialization is employed, and different SNR regimes are considered. The results show that higher SNR levels lead to faster convergence and lower final objective values. 
In all cases reported in Fig.~\ref{converge_asce}, as well as in the subsequent simulations, iterative decoding typically converges within five iterations.

Fig.~\ref{semi_comp} compares the detection and estimation performance of non-iterative decoding and iterative decoding at convergence. For comparison, we include the performance achieved by solving Problem~\eqref{semiblind-RLS} via exhaustive search. The corresponding benchmarks assuming perfect CSI are also included. Results show that iterative decoding with discrete data-symbol updates only suffers an SNR loss of about 2~dB with respect to the exhaustive-search solution. 
Moreover, the proposed decoding strategies provide a wide spectrum of tradeoffs between computational complexity and system performance. 

When $\mathcal{R}_{\mathrm{S}}=\mathcal{R}_{\mathrm{T}}=4$ bit/frame, a comparison of the BER and NRMSE performance of  the pilot-free and pilot-aided signaling schemes under different SNR regimes can be obtained from Figs.~\ref{joint_comp} and~\ref{semi_comp}. 
Fig.~\ref{blind_semi_comp} further compares the BER performance of these schemes for different transmission rates, when $\mathrm{SNR}_{\mathrm{STR}}=15$~dB and $\mathrm{SNR}_{\mathrm{SR}}=20$~dB. The top subfigure shows the source BER versus $\mathcal{R}_{\mathrm{S}}\in\{1,2,\ldots,9\}$ bit/frame for $\mathcal{R}_{\mathrm{T}}=4$ bit/frame, while the bottom subfigure shows the tag BER versus $\mathcal{R}_{\mathrm{T}}\in\{1,2,\ldots,9\}$ bit/frame for $\mathcal{R}_{\mathrm{S}}=4$ bit/frame. 
Several remarks are now in order.

In all reported examples, for the same transmission rate, pilot-free signaling exhibits slightly better BER performance owing to the favorable structural properties of the source and tag codebooks. In particular, the good autocorrelation of Gold sequences ensures that the convolution
matrix $\bm{\Xi}_{\bm{c}}$ is well conditioned for all source messages. Moreover, the orthogonality of the tag codebook $\mathsf{X}$ to the all-one vector facilitates separability of the STR and SR signal components. The price for this efficiency is the limited number of available codewords. For the tag, since there are only $126$ binary vectors of length $L=10$ satisfying both~\eqref{orhogonality_condition} and~\eqref{separability_tag}, the BER performance degrades smoothly up to $\mathcal{R}_{\mathrm{T}}=\lfloor \log_2 126 \rfloor = 6$ bit/frame, while it deteriorates rapidly when $\mathcal{R}_{\mathrm{T}}\geq 7$. 
For the source, since at most $33$ Gold codes of length $N=31$ can be constructed, $\mathcal{R}_{\mathrm{S}}$ is limited here to $\lfloor \log_2 33 \rfloor = 5$ bit/frame.

Pilot-aided signaling relaxes the structural constraints of the pilot-free design and therefore allows higher transmission rates. For the tag, Fig.~\ref{blind_semi_comp} (bottom) shows that the BER deteriorates rapidly when $\mathcal{R}_{\mathrm{T}}\geq 9$. 
This behavior is consistent with Condition~\eqref{separability_tag_2} in Proposition~\ref{prop-2}, which requires $\mathcal{R}_{\mathrm{T}}=L_\mathrm{D}\leq 8$ in this scenario. 
For the source, Condition~\eqref{separability_source_2} in Proposition~\ref{prop-2} requires the use of at least $Q+1=3$ pilot symbols (hence $\mathcal{R}_{\mathrm{S}}=N_\mathrm{D}\leq 28$). 
While this condition is satisfied over the rate interval considered in Fig.~\ref{blind_semi_comp} (top), increasing the number of data symbols may degrade the autocorrelation properties of the transmitted codewords. 

To quantify this effect, Table~\ref{auto_corr} reports the worst-case Peak-to-Sidelobe Level (PSL) and Integrated Sidelobe Level Ratio (ISLR) of the aperiodic autocorrelation function of the source codewords versus $\mathcal{R}_{\mathrm{S}}$, where the worst case is taken over all codewords and the resulting value is then averaged over the possible pilot-symbol choices, each obtained from a segment of a different Gold sequence.\footnote{Here, PSL denotes the maximum sidelobe magnitude of the aperiodic autocorrelation function relative to the main peak, whereas ISLR denotes the ratio between the total sidelobe energy and the mainlobe energy.}
$\mathcal{R}_{\mathrm{S}}=0$ corresponds to the reference baseline where the transmitted signal reduces to a single Gold sequence.
As $\mathcal{R}_{\mathrm{S}}$ increases, both PSL and ISLR increase, revealing an inherent tradeoff between higher transmission rates and degraded autocorrelation properties  of the radar probing signal.

For comparison, Fig.~\ref{blind_semi_comp} (bottom) includes the tag BER obtained with existing schemes.
For pilot-aided signaling, we include the iterative decoding schemes in~\cite{Venturino-2024} with discrete and relaxed data-symbol updates (referred to as ASCE and R-ASCE, respectively). For pilot-free signaling, disjoint decoding coincides with that in~\cite{Venturino-2023} and therefore serves as the baseline scheme.
The results show that these schemes exhibit inferior performance compared with the proposed solutions, as they do not exploit the subspace structure of the STR and SR channels induced by the source waveform and instead recover the tag message without decoding the source message.

\begin{table}[!tp]
	\centering
	\caption{Worst-Case PSL and ISLR (Averaged over Pilot-Symbol Choices)}
	\setlength{\tabcolsep}{3pt}
\begin{tabular}{c cccccccccc}
	\toprule
	& \multicolumn{10}{c}{$\mathcal{R}_{\mathrm{S}}$} \\
	\cmidrule(lr){2-11}
	& 0 & 1 & 2 & 3 & 4 & 5 & 6 & 7 & 8 & 9 \\
	\midrule
	PSL [dB]  & -9.4 & -9.1 & -8.6 & -8.0 & -7.6 & -7.2 & -6.8 & -6.4 & -6.1 & -5.9 \\
	ISLR [dB] & 0.4 & 0.6 & 0.8 & 1.0 & 1.3 & 1.7 & 2.1 & 2.5 & 2.8 & 3.0 \\
	\bottomrule
\end{tabular}
	\label{auto_corr}
\end{table}

\section{Conclusions}\label{SEC_Conclusions}
This paper investigated a dual-function radar-communication system enabling direct communication and backscatter communication through clutter-induced reverberation. By exploiting the repeated coded-pulse structure across pulse-repetition intervals (PRIs), two signaling schemes were developed.
The key distinction between the two schemes lies in the degree of coupling between channel estimation and data detection induced by the modulation scheme. In the pilot-free case, the source and tag messages are conveyed through nonlinear vector modulation, which intrinsically couples channel estimation and data detection; the unknown quantities can be recovered either jointly or disjointly, with the tag codeword estimated first and the source codeword and channel vectors subsequently recovered. In the pilot-aided case, pilot symbols and linearly modulated data symbols are embedded within each frame, enabling a separation between channel estimation and data detection; this allows for non-iterative decoding based on pilot-derived channel estimates, as well as iterative decoding schemes that refine channel and data estimates through alternating updates.
For both signaling schemes, sufficient conditions guaranteeing noiseless identifiability were established. Numerical results demonstrated the effectiveness of the proposed approaches in terms of message detection and channel estimation accuracy, and highlighted key tradeoffs among detection reliability, achievable transmission rate, and radar waveform autocorrelation properties.

Future work will consider more general sensing-communication scenarios, including joint target localization and tag decoding, multi-static transceiver architectures, and robust decoding under time-varying channels.

\bibliographystyle{IEEEtran}
\bibliography{references}	
\end{document}